\newcommandx*\disguisemath[3][1=c]{%
	\mathchoice{%
		\makebox[\widthof{$\displaystyle#2$}][#1]{$\displaystyle#3$}%
	}{%
		\makebox[\widthof{$\textstyle#2$}][#1]{$\textstyle#3$}%
	}{%
		\makebox[\widthof{$\scriptstyle#2$}][#1]{$\scriptstyle#3$}%
	}{%
		\makebox[\widthof{$\scriptscriptstyle#2$}][#1]{$\scriptscriptstyle#3$}%
	}%
}
\newtheorem{fact}{Fact}
\newcommand*{\R}{\mathbb{R}}
\newcommandx*{\LDAUOmicron}[2][1=@pkling_false]{\mathrm{O}\!\ifthenelse{\equal{#1}{small}}{\bigl(#2\bigr)}{\left(#2\right)}}
\newcommandx*{\LDAUomicron}[2][1=@pkling_false]{\mathrm{o}\ifthenelse{\equal{#1}{small}}{\bigl(#2\bigr)}{\left(#2\right)}}
\newcommandx*{\LDAUOmega}[2][1=@pkling_false]{\Omega\!\ifthenelse{\equal{#1}{small}}{\bigl(#2\bigr)}{\left(#2\right)}}
\newcommandx*{\LDAUomega}[2][1=@pkling_false]{\omega\ifthenelse{\equal{#1}{small}}{\bigl(#2\bigr)}{\left(#2\right)}}
\newcommandx*{\LDAUTheta}[2][1=@pkling_false]{\Theta\ifthenelse{\equal{#1}{small}}{\bigl(#2\bigr)}{\left(#2\right)}}
\DeclareMathOperator{\dif}{d\!}
\newcommand*{\od}[3][]{\ensuremath{\ifinner\tfrac{\dif{^{#1}}#2}{\dif{#3^{#1}}}\else\dfrac{\dif{^{#1}}#2}{\dif{#3^{#1}}}\fi}}
\newcommand*{\abs}[1]{\left\lvert#1\right\rvert}
\newcommand*{\intco}[1]{\ensuremath{\left[#1\right)}}
\newcommand*{\intoc}[1]{\ensuremath{\left(#1\right]}}
\newcommand*{\intcc}[1]{\ensuremath{\left[#1\right]}}
\newcommandx*{\set}[2][2=@pkling_false]{\left\{#1\ifthenelse{\equal{#2}{@pkling_false}}{}{\;\middle|\;#2}\right\}}
\newcommand*{\AVR}{\ensuremath{\text{AVR}}\xspace}
\newcommand*{\BKP}{\ensuremath{\text{BKP}}\xspace}
\newcommand*{\cost}[1]{\operatorname{cost}(#1)}
\newcommand*{\crejONE}{\ensuremath{c_{1}}\xspace}
\newcommand*{\crejTWO}{\ensuremath{c_{2}}\xspace}
\newcommand*{\valdens}[1]{\ensuremath{\delta_{#1}}}
\newcommand*{\valdensMAX}{\ensuremath{\delta_{\text{max}}}\xspace}
\newcommand*{\Eidle}[1]{\ensuremath{E^{#1}_{\text{idle}}}}
\newcommand*{\Esys}[1]{\ensuremath{E^{#1}_{\text{sys}}}}
\newcommand*{\Esleep}[1]{\ensuremath{E^{#1}_{\text{sleep}}}}
\newcommand*{\Ework}[1]{\ensuremath{E^{#1}_{\text{work}}}}
\newcommand*{\OA}{\ensuremath{\text{OA}}\xspace}
\newcommand*{\penrat}[1]{\ensuremath{\Gamma_{#1}}}
\newcommand*{\penratMAX}{\ensuremath{\Gamma}\xspace}
\newcommand*{\PS}{\ensuremath{\text{PS}}\xspace}
\newcommand*{\scrit}{\ensuremath{s_{\text{cr}}}\xspace}
\newcommand*{\sprofup}[1]{\ensuremath{s_{#1,\text{p}}}}
\newcommand*{\Vrej}[1]{\ensuremath{V^{#1}_{\text{rej}}}}
\newcommand*{\Wrem}[2]{\ensuremath{w^{#1}_{#2}}}
\title{Slow Down \& Sleep for Profit in\\Online Deadline Scheduling\thanks{This work was partially supported by the German Research Foundation (DFG) within the Collaborative Research Center ``On-The-Fly Computing'' (SFB 901) and by the Graduate School on Applied Network Science (GSANS).}}
\titlerunning{Slow Down \& Sleep for Profit in Online Deadline Scheduling}
\date{\today}
\author{Andreas Cord-Landwehr \and Peter Kling \and Frederik Mallmann-Trenn}
\institute{Heinz Nixdorf Institute and Computer Science Department, University of Paderborn\\
\email{\{andreas.cord-landwehr@,\,peter.kling@,\,xarph@mail.\}uni-paderborn.de}
}
\begin{document}
\maketitle
\begin{abstract}
We present and study a new model for energy-aware and profit-oriented scheduling on a single processor.
The processor features dynamic speed scaling as well as suspension to a sleep mode.
Jobs arrive over time, are preemptable, and have different sizes, values, and deadlines.
On the arrival of a new job, the scheduler may either accept or reject the job.
Accepted jobs need a certain energy investment to be finished in time, while rejected jobs cause costs equal to their values.
Here, power consumption at speed $s$ is given by $P(s)=s^{\alpha}+\beta$ and the energy investment is power integrated over time.
Additionally, the scheduler may decide to suspend the processor to a sleep mode in which no energy is consumed, though awaking entails fixed transition costs $\gamma$.
The objective is to minimize the total value of rejected jobs plus the total energy.

Our model combines aspects from advanced energy conservation techniques (namely speed scaling and sleep states) and profit-oriented scheduling models.
We show that \emph{rejection-oblivious} schedulers (whose rejection decisions are not based on former decisions) have – in contrast to the model without sleep states – an unbounded competitive ratio w.r.t\text{.} the processor parameters $\alpha$ and $\beta$.
It turns out that the worst-case performance of such schedulers depends linearly on the jobs' value densities (the ratio between a job's value and its work).
We give an algorithm whose competitiveness nearly matches this lower bound.
If the maximum value density is not too large, the competitiveness becomes $\alpha^{\alpha}+2e\alpha$.
Also, we show that it suffices to restrict the value density of low-value jobs only.
Using a technique from \cite{Chan:2010} we transfer our results to processors with a fixed maximum speed.
\end{abstract}

\section{Introduction}
Over the last decade, energy usage of data centers and computers in general has become a major concern.
There are various reasons for this development: the ubiquity of technical systems, the rise of mobile computing, as well as a growing ecological awareness.
Also from an economical viewpoint, energy usage can no longer be ignored.
Energy costs for both the actual computation and the cooling have become \emph{the} decisive cost factor in today's data centers (see, e.g., \textcite{Barroso:2007}).
In combination with improvements on the technical level, algorithmic research has great potential to reduce energy consumption.
\Textcite{Albers:2010} gives a good insight on the role of algorithms to fully exploit the energy-saving mechanisms of modern systems.
Two of the most prominent techniques for power saving are \emph{dynamic speed scaling} and \emph{power-down}.
The former allows a system to save energy by adapting the processor's speed to the current system load, while the latter can be used to transition into a sleep mode to conserve energy.
There is an extensive body of literature on both techniques (see below).
From an algorithmic viewpoint, the most challenging aspect in the design of scheduling strategies is to handle the lack of knowledge about the future: should we use a high speed to free resources in anticipation of new jobs or enter sleep mode in the hope that no new jobs arrive in the near future?

Given that profitability is a driving force for most modern systems and that energy consumption has gained such a high significance, it seems natural to take this relation explicitly into account.
\Textcite{Pruhs:2010} consider a scheduling model that does so by introducing job values.
Their scheduler controls energy usage via speed scaling and is allowed to reject jobs if their values seem too low compared to their foreseeable energy requirements.
The objective is to maximize the profit, which is modeled as the total value of finished jobs minus the invested energy.
Our work is based on a result by \textcite{Chan:2010}.
We enhance their model by combining speed scaling and power-down mechanisms for energy management, which not only introduces non-trivial difficulties to overcome in the analysis, but proves to be inherently more complex compared to the original model insofar that classical algorithms can become arbitrarily bad.

\subsubsection{History \& Related Work.}
There is much literature concerning energy-aware scheduling strategies both in practical and theoretical contexts.
A recent survey by \textcite{Albers:2011} gives a good and compact overview on the state of the art in the dynamic speed scaling setting, also in combination with power-down mechanisms.
In the following, we focus on theoretical results concerning scheduling on a single processor for jobs with deadlines.
Theoretical work in this area has been initiated by \textcite{Yao:1995}.
They considered scheduling of jobs having different sizes and deadlines on a single variable-speed processor.
When running at speed $s$, its power consumption is $P(s)=s^{\alpha}$ for some constant $\alpha\geq2$.
\Citeauthor{Yao:1995} derived a polynomial time optimal offline algorithm as well as two online algorithms known as \emph{optimal available} (\OA) and \emph{average rate} (\AVR).
Up to now, \OA remains one of the most important algorithms in this area, as it is used as a basic building block by many strategies (including the strategy we present in this paper).
Using an elegant amortized potential function argument, \textcite{Bansal:2007a} were able to show that \OA's competitive factor is exactly $\alpha^{\alpha}$.
Moreover, the authors stated a new algorithm, named \BKP, which achieves a competitive ratio of essentially $2e^{\alpha+1}$.
This improves upon \OA for large $\alpha$.
The best known lower bound for deterministic algorithms is $\sfrac{e^{\alpha-1}}{\alpha}$ due to \textcite{Bansal:2009}.
They also presented an algorithm (qOA) that is particularly well-suited for low powers of $\alpha$.
An interesting and realistic model extension is the restriction of the maximum processor speed.
In such a setting, a scheduler may not always be able to finish all jobs by their deadlines.
\Textcite{Chan:2007} were the first to consider the combination of classical speed scaling with such a maximum speed.
They gave an algorithm that is $\alpha^{\alpha}+\alpha^2 4^{\alpha}$-competitive on energy and $14$-competitive on throughput.
\Textcite{Bansal:2008} improved this to a $4$-competitive algorithm concerning the throughput while maintaining a constant competitive ratio with respect to the energy.
Note that no algorithm – even if ignoring the energy consumption – can be better than $4$-competitive for throughput (see~\cite{Baruah:1991}).

Power-down mechanisms were studied by \textcite{Baptiste:2006}.
He considered a fixed-speed processor needing a certain amount of energy to stay awake, but which may switch into a sleep state to save energy.
Returning from sleep needs energy $\gamma$.
For jobs of unit size, he gave a polynomial time optimal offline algorithm, which was later extended to jobs of arbitrary size~\cite{Baptiste:2007}.
The first work to combine both dynamic speed scaling and sleep states in the classical YAO-model is due to \textcite{Irani:2007}.
They achieved a $2$-approximation for arbitrary convex power functions.
For the online setting and power function $P(s)=s^{\alpha}+\beta$ a competitive factor of $4^{\alpha-1}\alpha^{\alpha}+2^{\alpha-1}+2$ was reached.
\Textcite{Han:2010} improved upon this in two respects: they lowered the competitive factor to $\alpha^{\alpha}+2$ and transferred the result to scenarios limiting the maximum speed.
Only recently, \textcite{Albers:2012} proved that the optimization problem is NP-hard and gave lower bounds for several algorithm classes.
Moreover, they improved the approximation factor for general convex power functions to $\sfrac{4}{3}$.
The papers most closely related to ours are due to \textcite{Pruhs:2010} and \textcite{Chan:2010}.
Both considered the dynamic speed scaling model of \citeauthor{Yao:1995}.
However, they extended the idea of energy-minimal schedules to a profit-oriented objective.
In the simplest case, jobs have values (or priorities) and the scheduler is no longer required to finish all jobs.
Instead, it can decide to reject jobs whose values do not justify the foreseeable energy investment necessary to complete them.
The objective is to maximize profit~\cite{Pruhs:2010} or, similarly, minimize the loss~\cite{Chan:2010}.
As argued by the authors, the latter model has the benefit of being a direct generalization of the classical model of \textcite{Yao:1995}.
For maximizing the profit, \textcite{Pruhs:2010} showed that, in order to achieve a bounded competitive factor, resource augmentation is necessary and gave a scalable online algorithm.
For minimizing the loss, \textcite{Chan:2010} gave a $\alpha^{\alpha}+2e\alpha$-competitive algorithm and transferred the result to the case of a bounded maximum speed.

\subsubsection{Our Contribution.}
We present the first model that not only takes into account two of the most prominent energy conservation techniques (namely, speed scaling and power-down) but couples the energy minimization objective with the idea of profitability.
It combines aspects from both \cite{Irani:2007} and \cite{Chan:2010}.
From~\cite{Irani:2007} we inherit one of the most realistic processor models considered in this area:
A single variable-speed processor with power function $P(s)=s^{\alpha}+\beta$ and a sleep state.
Thus, even at speed zero the system is charged a certain amount $\beta$ of energy, but it can suspend to sleep such that no energy is consumed.
Waking up causes transition cost of $\gamma$.
The job model stems from~\cite{Chan:2010}:
Jobs arrive in an online fashion, are preemptable, and have a deadline, size, and value.
The scheduler can reject jobs (e.g., if their values do not justify the presumed energy investment).
Its objective is to minimize the total energy investment plus the total value of rejected jobs.

A major insight of ours is that the maximum value density \valdensMAX (i.e., the ratio between a job's value and its work) is a parameter that is inherently connected to the necessary and sufficient competitive ratio achievable for our online scheduling problem.
We present an online algorithm that combines ideas from \cite{Chan:2010} and \cite{Han:2010} and analyze its competitive ratio with respect to \valdensMAX.
This yields an upper bound of $\alpha^{\alpha}+2 e\alpha+\valdensMAX\frac{\scrit}{P(\scrit)}$.\footnote{The expression $\frac{\scrit}{P(\scrit)}$ depends only on $\alpha$ and $\beta$, see Section~\ref{sec:preliminaries}.}
If the value density of low-valued jobs is not too large or job values are at least $\gamma$, the competitive ratio becomes $\alpha^{\alpha}+2e\alpha$.
Moreover, we show that one cannot do much better: any \emph{rejection-oblivious} strategy has a competitive ratio of at least $\valdensMAX\frac{\scrit}{P(\scrit)}$.
Here, rejection-oblivious means that rejection decisions are based on the \emph{current} system state and job properties only.
This lower bound is in stark contrast to the setting without sleep states, where a rejection-oblivious $\LDAUOmicron{1}$-competitive algorithm exists~\cite{Chan:2010}.
Using the definition of a job's penalty ratio (due to \textcite{Chan:2010}), we extend our results to processors with a bounded maximum speed.

\section{Model \& Preliminaries}\label{sec:preliminaries}
We are given a speed-scalable processor that can be set to any speed $s\in[0,\infty)$.
When running at speed $s$ its power consumption is $P_{\alpha,\beta}(s)=s^{\alpha}+\beta$ with $\alpha\geq2$ and $\beta\geq0$.
If $s(t)$ denotes the processor speed at time $t$, the total power consumption is $\int_{0}^{\infty}P_{\alpha,\beta}(s(t))\dif{t}$.
We can suspend the processor into a sleep state to save energy.
In this state, it cannot process any jobs and has a power consumption of zero.
Though entering the sleep state is for free, waking up needs a fixed \emph{transition energy} $\gamma\geq0$.
Over time, $n$ jobs $J=\set{1,2,\ldots,n}$ are released.
Each job $j$ appears at its release time $r_j$ and has a deadline $d_j$, a (non-negative) value $v_j$, and requires a certain amount $w_j$ of work.
The processor can process at most one job at a time.
Preemption is allowed, i.e., jobs may be paused at any time and continued later on.
If $I$ denotes the period of time (not necessarily an interval) when $j$ is scheduled, the amount of work processed is $\int_Is(t)\dif{t}$.
A job is finished if $\int_Is(t)\dif{t}\geq w_j$.
Jobs not finished until their deadline cause a cost equal to their value.
We call such jobs \emph{rejected}.
A schedule $S$ specifies for any time $t$ the processor's state (asleep or awake), the currently processed job (if the processor is awake), and sets the speed $s(t)$.
W.l.o.g. we assume $s(t)=0$ when no job is being processed.
Initially, the processor is assumed to be asleep.
Whenever it is neither sleeping nor working we say it is \emph{idle}.
A schedule's cost is the invested energy (for awaking from sleep, idling, and working on jobs) plus the loss due to rejected jobs.
Let $m$ denote the number of sleep intervals, $l$ the total length of idle intervals, and $\mathcal{I}_{\text{work}}$ the collection of all working intervals (i.e., times when $s(t)>0$).
Then, the schedule's \emph{sleeping energy} is $\Esleep{S}:=(m-1)\gamma$, its \emph{idling energy} is $\Eidle{S}:=l\beta$, and its \emph{working energy} is $\Ework{S}:=\int_{\mathcal{I}_{\text{work}}}P_{\alpha,\beta}(s(t))\dif{t}$.
We use $\Vrej{S}$ to denote the total value of rejected jobs.
Now, the cost of schedule $S$ is
\begin{equation}
\cost{S}:=\Esleep{S}+\Eidle{S}+\Ework{S}+\Vrej{S}
.
\end{equation}
We seek online strategies yielding a provably good schedule.
More formally, we measure the quality of online strategies by their competitive factor:
For an online algorithm $A$ and a problem instance $I$ let $A(I)$ denote the resulting schedule and $O(I)$ an optimal schedule for $I$.
Then, $A$ is said to be $c$-competitive if $\sup_I\frac{\cost{A(I)}}{\cost{O(I)}}\leq c$.

We define the \emph{system energy} \Esys{S} of a schedule to be the energy needed to hold the system awake (whilst idling and working).
That is, if $S$ is awake for a total of $x$ time units, $\Esys{S}=x\beta$.
Note that $\Esys{S}\leq\Eidle{S}+\Ework{S}$.
The \emph{critical speed} of the power function is defined as $\scrit:=\arg\min_{s\geq0}\sfrac{P_{\alpha,\beta}(s)}{s}$ (cf.~also~\cite{Irani:2007,Han:2010}).
If job $j$ is processed at constant speed $s$ its energy usage is $w_j\cdot\sfrac{P_{\alpha,\beta}(s)}{s}$.
Thus, assuming that $j$ is the only job in the system and ignoring its deadline, \scrit is the energy-optimal speed to process $j$.
One can easily check that $\scrit^{\alpha}=\frac{\beta}{\alpha-1}$.
Given a job $j$, let $\valdens{j}:=\sfrac{v_j}{w_j}$ denote the job's \emph{value density}.
Following~\cite{Chan:2010} and~\cite{Pruhs:2010}, we define the \emph{profitable speed} \sprofup{j} of job $j$ to be the maximum speed for which its processing may be profitable.
More formally, $\sprofup{j}:=\max\set{s\geq0}[w_j\cdot\sfrac{P_{\alpha,0}(s)}{s}\leq v_j]$.
Note that the definition is with respect to $P_{\alpha,0}$, i.e., it ignores the system energy.
The profitable speed can be more explicitly characterized by $\sprofup{j}^{\alpha-1}=\valdens{j}$.
It is easy to see that a schedule that processes $j$ at average speed faster than \sprofup{j} cannot be optimal: rejecting $j$ and idling during the former execution phase would be more profitable.
See Figure~\ref{fig:basicnotions} for an illustration of these notions.
\begin{figure}
\subfloat[Our algorithm tries to use job speeds that essentially stay in the shaded interval.]{\includegraphics[width=0.4\linewidth]{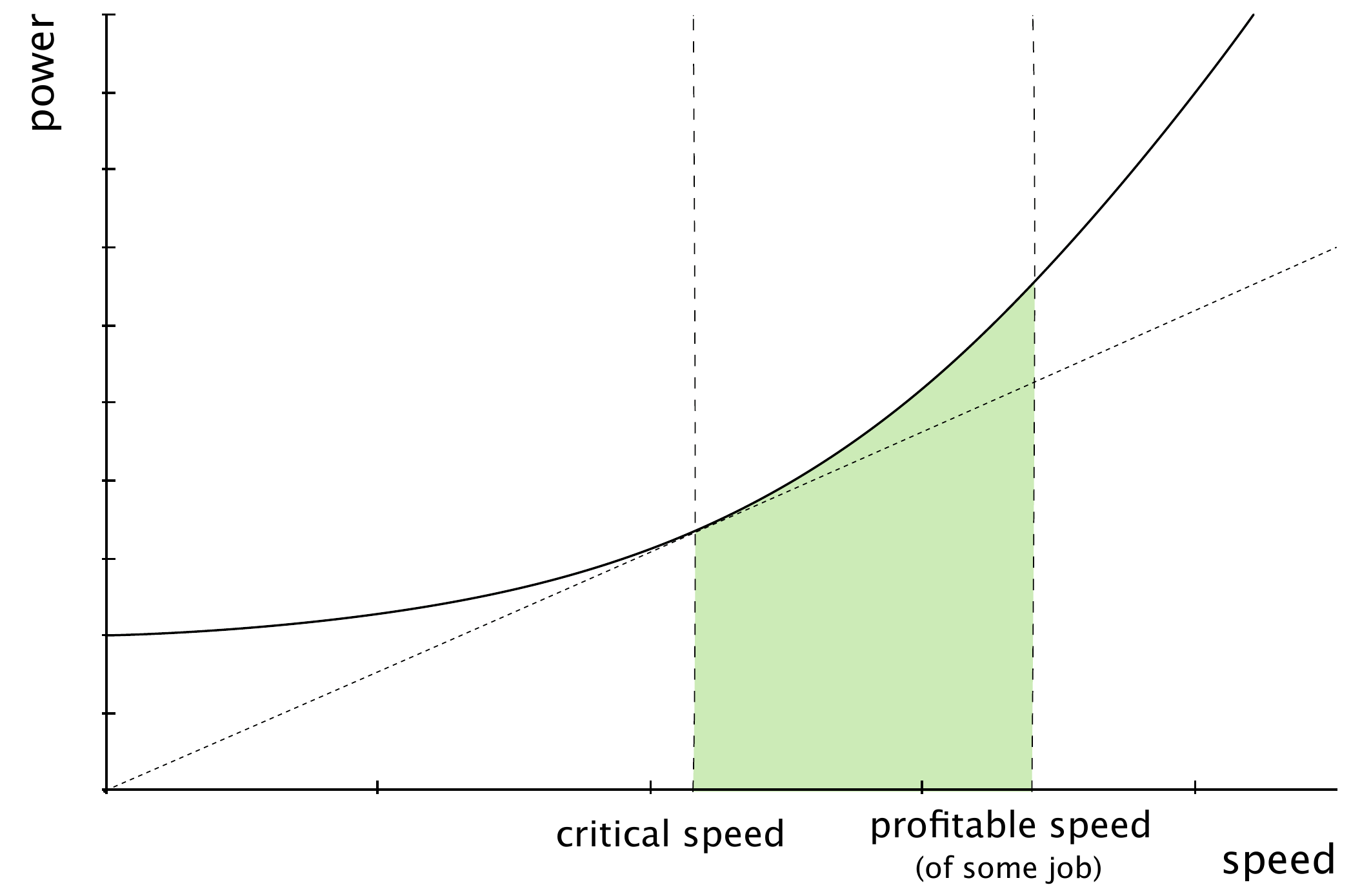}\qquad\quad}
\hfill
\subfloat[A sample schedule and the involved energy types.]{\includegraphics[width=0.49\linewidth]{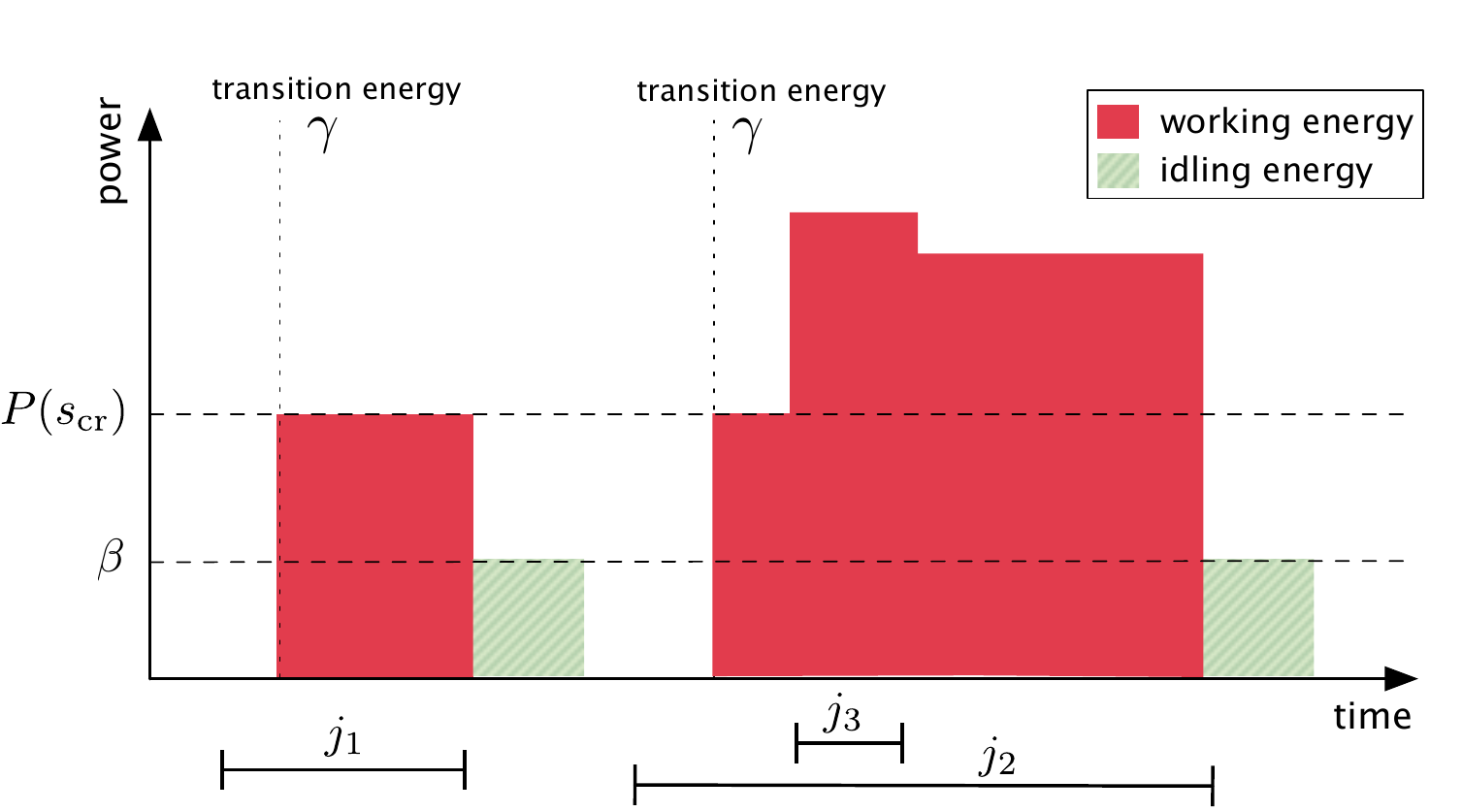}}
\caption{}\label{fig:basicnotions}
\end{figure}

\subsubsection{Optimal Available \& Structural Properties.}
One of the first online algorithms for dynamic speed scaling was Optimal Available (\OA) due to \cite{Yao:1995}.
As it is an essential building block not only of our but many algorithms for speed scaling, we give a short recap on its idea (see~\cite{Bansal:2007a} for a thorough discussion and analysis).
At any time, \OA computes the optimal offline schedule assuming that no further jobs arrive.
This optimal offline schedule is computed as follows:
Let the density of an interval $I$ be defined as $\sfrac{w(I)}{\abs{I}}$.
Here, $w(I)$ denotes the total work of jobs $j$ with $\intco{r_j,d_j}\subseteq I$ and $\abs{I}$ the length of $I$.
Now, whenever a job arrives \OA computes so-called \emph{critical intervals} by iteratively choosing an interval of maximum density.
Jobs are then scheduled at a speed equal to the density of the corresponding critical interval using the earliest deadline first policy.
Let us summarize several structural facts known about the \OA schedule.

\begin{fact}\label{fact:structuralproperties}
Let $S$ and $S'$ denote the \OA schedules just before and after $j$'s arrival.
We use $S(j)$ and $S'(j)$ to denote $j$'s speed in the corresponding schedule.
\begin{enumerate}[(a)]
\item\label{fact:structuralproperties_a} The speed function of $S$ (and $S'$) is a non-increasing staircase function.
\item\label{fact:structuralproperties_b} The minimal speed of $S'$ during \intco{r_j,d_j} is at least $S'(j)$.
\item\label{fact:structuralproperties_c} Let $I$ be an arbitrary period of time during \intco{r_j,d_j} (not necessarily an interval).
	Moreover, let $W$ denote the total amount of work scheduled by $S$ and $W'$ the one scheduled by $S'$ during $I$.
	Then the inequality $W\leq W'\leq W+w_j$ holds.
\item\label{fact:structuralproperties_d} The speed of any $j'\neq j$ can only increase due to $j$'s arrival: $S'(j')\geq S(j')$.
\end{enumerate}
\end{fact}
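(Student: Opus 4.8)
The plan is to derive all four items from the critical-interval (YDS) description of \OA recalled above. First I would note that at time $r_j$ both $S$ and $S'$ are \OA schedules for the \emph{same} set of already-released jobs with their current remaining work, the only difference being that $S'$ additionally contains $j$; since every such job is available from $r_j$ onward, both are YDS schedules for instances in which all jobs share the common release time $r_j$. For such common-release instances the critical intervals tile $\intco{r_j,\infty}$ as $\intco{r_j,\tau_1},\intco{\tau_1,\tau_2},\dots$ with \emph{interval} densities $\delta_1,\delta_2,\dots$ (here $\delta_i=\sfrac{w_i}{\abs{\intco{\tau_{i-1},\tau_i}}}$), and on each interval the processor runs at that density under earliest-deadline-first. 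Part (a) then reduces to showing $\delta_1\ge\delta_2\ge\cdots$, which I would prove by contradiction: if $\delta_{i+1}>\delta_i$, the merged interval $\intco{\tau_{i-1},\tau_{i+1}}$ has density equal to a convex combination of $\delta_i$ and $\delta_{i+1}$, hence exceeding $\delta_i$, contradicting that the $i$-th critical interval was selected with maximum density among those starting at $\tau_{i-1}$. The same argument applies verbatim to $S$ and to $S'$, giving the non-increasing staircase in both.

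The technical heart is a monotonicity lemma: inserting $j$ can only weakly raise the speed at every point in time, i.e.\ $s_{S'}(t)\ge s_S(t)$, equivalently it raises every critical-interval density. This is exactly what part (d) asserts at the per-job level, since each job $j'$ runs at the density of its critical interval. I would prove it from the observation that adding $j$ leaves $w(I)$ unchanged for every interval $I$ with $\intco{r_j,d_j}\not\subseteq I$ and increases it by $w_j$ for the rest, so that the maximum-density envelope from which the speed function is peeled off can only move up. Making this ``the peeling is monotone in the densities'' step rigorous — rather than merely intuitive — is where I expect the real work to lie; I would handle it either by an exchange argument between the two optimal schedules, or by invoking the convexity characterization of the energy-minimal speed function (the optimal speeds are the slopes of the lower convex envelope of the remaining-work-versus-deadline profile, which is monotone under adding work).

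Granting part (a) and the monotonicity lemma, the other two items follow quickly. For part (b), job $j$ is scheduled in the single critical interval $\intco{\tau_{i-1},\tau_i}$ with $\tau_{i-1}<d_j\le\tau_i$, so $S'(j)=\delta_i$ is the density of the \emph{last} critical interval meeting $\intco{r_j,d_j}$; since the staircase is non-increasing, $\delta_i$ is the smallest speed occurring on $\intco{r_j,d_j}$, whence every speed there is at least $S'(j)$. For part (c), the lower bound is immediate from pointwise monotonicity, $W=\int_I s_S\le\int_I s_{S'}=W'$. For the upper bound I would use that $S'$ processes in total exactly $w_j$ more work than $S$ over $\intco{r_j,\infty}$ (both finish all their jobs), so $\int_{\intco{r_j,\infty}}(s_{S'}-s_S)=w_j$; as the integrand is non-negative and $I\subseteq\intco{r_j,d_j}\subseteq\intco{r_j,\infty}$, we get $W'-W=\int_I(s_{S'}-s_S)\le w_j$, which is the claim. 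Thus (a) and the monotonicity lemma are the load-bearing steps, with the monotonicity of the peeling being the one genuine obstacle.
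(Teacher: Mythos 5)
You should first be aware that the paper does not prove Fact~\ref{fact:structuralproperties} at all: it states these properties as known and points to \textcite{Bansal:2007a} for the analysis, so your attempt has to stand on its own merits. Its architecture is sound, and most of it is correct: the reduction to a common-release-time instance at $r_j$ (all remaining work is available from $r_j$ on), the observation that the critical intervals of such an instance tile $[r_j,\infty)$ consecutively, the merging argument for (a) (if $\delta_{i+1}>\delta_i$, the merged interval has density equal to a convex combination exceeding $\delta_i$, contradicting maximality at stage $i$), and the derivation of (b) from (a) together with the fact that $j$ is assigned to the critical interval whose span contains $d_j$, all go through.

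The genuine gap is the monotonicity lemma $s_{S'}(t)\geq s_S(t)$ for all $t$, on which both (c) and (d) rest. You explicitly defer it, and the one concrete route you sketch for it is, as stated, logically insufficient: knowing that the deadline--work profile only increases when $j$ is added tells you that its least concave majorant (it is an upper concave envelope whose slopes are the speeds, not a ``lower convex envelope'') only increases \emph{pointwise}, but pointwise dominance of two concave curves does not imply pointwise dominance of their slopes. For instance, $C(t)=t/2$ and $C'(t)=\min(t,1)$ on $[0,2]$ are both concave with $C'\geq C$ everywhere, yet $C'$ has slope $0<1/2$ on $(1,2)$. This configuration cannot actually arise from adding a job --- the new curve must end exactly $w_j$ higher, and the profiles differ by a single upward step of height $w_j$ at $d_j$ --- but ruling it out is precisely the content of the lemma, and requires an inductive or exchange argument over the peeling of critical intervals (this is essentially what \textcite{Bansal:2007a} do). So the load-bearing step of your proof is missing, and the justification offered for it does not close the hole; once that lemma is supplied, your derivations of (c) (lower bound by pointwise dominance, upper bound by $\int(s_{S'}-s_S)=w_j$ over $[r_j,\infty)$) and of (d) (each job's speed equals the schedule's speed just before its deadline) are fine.
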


\section{Lower Bound for Rejection-Oblivious Algorithms}\label{sec:lowerbounds}
This section considers a class of simple, deterministic online algorithms that we call \emph{rejection-oblivious}.
When a job arrives, a rejection-oblivious algorithm decides whether to accept or reject the job.
This decision is based solely on the processor's current state (sleeping, idling, working), its current workload, and the job's properties.
Especially it does not take former decisions into account.
An example for such an algorithm is $PS(c)$ in \cite{Chan:2010}.
For a suitable parameter $c$, it is $\alpha^{\alpha}+2e\alpha$-competitive in a model without sleep state.
In this section we show that in our model (i.e., with a sleep state) no rejection-oblivious algorithm can be competitive.
More exactly, the competitiveness of any such algorithm can become arbitrarily large.
We identify the jobs' value density as a crucial parameter for the competitiveness of these algorithms.

\begin{theorem}\label{thm:lowerbound}
The competitiveness of any rejection-oblivious algorithm $A$ is unbounded.
More exactly, for any $A$ there is a problem instance $I$ with competitive factor $\geq\valdensMAX\frac{\scrit}{P_{\alpha,\beta}(\scrit)}$.
Here, \valdensMAX is the maximum value density of jobs from $I$.
\end{theorem}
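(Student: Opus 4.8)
The plan is to turn the defining restriction of a rejection-oblivious algorithm against it: since its accept/reject decision depends only on the processor's current state, the current workload, and the arriving job's own parameters, the \emph{same} job presented in a configuration that rejection leaves unchanged must receive the \emph{same} decision. I assume $\gamma>0$ (otherwise awaking is free and the sleep state is inert). Concretely, fix a work size $w>0$ small enough that $\valdensMAX\,w\le\gamma$ and $\gamma/(\valdensMAX\,w)\ge\valdensMAX\,\frac{\scrit}{P_{\alpha,\beta}(\scrit)}$, both of which hold once $w$ is sufficiently small. For a parameter $N\in\mathbb{N}$, let $j_N$ be the job with work $w$, value $\valdensMAX\,w$ (hence value density \valdensMAX), release time $0$, and deadline $d_N:=Nw/\scrit$. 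Since every job used has density \valdensMAX, the instance's maximum value density is \valdensMAX, and letting $\valdensMAX\to\infty$ at the end will yield unboundedness. From the initial (sleeping, empty) configuration, $A$ makes a fixed decision on $j_N$, and the argument branches on it.

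First suppose $A$ \emph{rejects} $j_N$ for arbitrarily large $N$. For such an $N$ I release $N$ identical copies of $j_N$ at time $0$. Rejecting a job alters neither the sleep state nor the (empty) workload, so $A$ faces the identical configuration at every arrival and must reject all $N$ copies, incurring cost $N\valdensMAX\,w$. In contrast, waking once and running at the critical speed \scrit finishes all $Nw$ units of work exactly by $d_N$, so the optimum costs at most $\gamma+Nw\,\frac{P_{\alpha,\beta}(\scrit)}{\scrit}$. The competitive factor on this instance is therefore at least $\frac{N\valdensMAX\,w}{\gamma+Nw\,P_{\alpha,\beta}(\scrit)/\scrit}$, which converges to $\valdensMAX\,\frac{\scrit}{P_{\alpha,\beta}(\scrit)}$ as $N\to\infty$; hence $A$'s competitiveness is at least this value.

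Otherwise $A$ \emph{accepts} $j_N$ for all large $N$. Fixing one such $N$ and releasing the single job $j_N$, the commitment to finish forces $A$ to awaken the processor and thus to pay at least the transition energy $\gamma$, whereas the optimum simply rejects $j_N$ at cost $\valdensMAX\,w<\gamma$. The competitive factor is then at least $\gamma/(\valdensMAX\,w)\ge\valdensMAX\,\frac{\scrit}{P_{\alpha,\beta}(\scrit)}$ by the choice of $w$. In both branches the competitiveness is at least $\valdensMAX\,\frac{\scrit}{P_{\alpha,\beta}(\scrit)}$, and since \valdensMAX may be taken arbitrarily large this proves the stated bound and the unboundedness.

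The delicate points are confined to the optimum's side of the reject-case and to the modeling of awaking in the accept-case. For the reject-case I only need an \emph{upper} bound on the optimum, which the single-wakeup batched schedule supplies; its validity rests on the common release time and on $d_N=Nw/\scrit$ being exactly long enough to process all copies at critical speed, while the fixed $\gamma$-term is amortized away by taking $N\to\infty$. The one modeling assumption I rely on is that awaking from sleep is charged $\gamma$; should the convention make the very first awakening free, it suffices to prepend a dummy job with a tight deadline that forces $A$ to wake and then return to sleep, so that the awakening triggered in the accept-case (and the sleeping configuration used for probing throughout) is a genuinely charged one.
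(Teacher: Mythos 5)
Your proof is correct and takes essentially the same approach as the paper: your reject branch is the paper's core instance (identical jobs of density \valdensMAX with deadlines tight for critical-speed processing, all rejected by $A$ because rejection leaves the sleeping, empty configuration unchanged, while the optimum wakes once and runs at \scrit), and your accept branch is the paper's cheap-job-forces-a-wakeup argument, charging $A$ at least $\gamma$ against an optimum that rejects at cost $\valdensMAX w\leq\gamma$. The differences are organizational only—you dichotomize on $A$'s decision for a fixed job family released simultaneously at time $0$ and pick $w$ so that both branches yield the bound $\valdensMAX\frac{\scrit}{P_{\alpha,\beta}(\scrit)}$, whereas the paper first establishes a uniform rejection threshold $x$ and releases the jobs sequentially—and you make explicit the (necessary) assumption $\gamma>0$ that the paper leaves implicit.
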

\begin{proof}
For $A$ to be competitive, there must be some $x\in\R$ such that, while $A$ is asleep, all jobs of value at most $x$ are rejected (independent of their work and deadlines).
Otherwise, we can define a sequence of $n$ identical jobs $1,2,\ldots,n$ of arbitrary small value $\epsilon$.
W.l.o.g., we release them such that $A$ goes to sleep during \intco{d_{j-1},r_j} (otherwise $A$ consumes an infinite amount of energy).
Thus, $A$'s cost is at least $n\gamma$.
If instead considering schedule $S$ that rejects all jobs, we have $\cost{S}=n\epsilon$.
For $\epsilon\to0$ we see that $A$'s competitive ratio is unbounded.

So, let $x\in\R$ be such that $A$ rejects any job of value at most $x$ whilst asleep.
Consider $n$ jobs of identical value $x$ and work $w$.
For each job, the deadline is set such that $w=\scrit(d_j-r_j)$.
The jobs are released in immediate succession, i.e., $r_j=d_{j-1}$.
Algorithm $A$ rejects all jobs, incurring cost $n x$.
Let $S$ denote the schedule that accepts all jobs and processes them at speed \scrit.
The cost of $S$ is given by $\cost{S}=\gamma+n w\frac{P_{\alpha,\beta}(\scrit)}{\scrit}$.
Thus, $A$'s competitive ratio is at least
\begin{equation*}
\frac{n x}{\gamma+n w\frac{P_{\alpha,\beta}(\scrit)}{\scrit}}=\valdensMAX\frac{1}{\frac{\gamma}{n w}+\frac{P_{\alpha,\beta}(\scrit)}{\scrit}}
.
\end{equation*}
For $n\to\infty$ we get the lower bound $\valdensMAX\smash{\frac{\scrit}{P_{\alpha,\beta}(\scrit)}}$.
\qed
\end{proof}

\section{Algorithm \& Analysis}\label{sec:algorithm+analysis}
In the following, we use $A$ to refer to both our algorithm and the schedule it produces; which is meant should be clear from the context.
As most algorithms in this area (see, e.g.,~\cite{Irani:2007,Bansal:2009,Chan:2010,Han:2010,Albers:2011a}), $A$ relies heavily on the good structural properties of \OA and its wide applicability to variants of the original energy-oriented scheduling model of \textcite{Yao:1995}.
It essentially consists of two components, the \emph{rejection policy} and the \emph{scheduling policy}.
The rejection policy decides which jobs to accept or reject, while the scheduling policy ensures that all accepted jobs are finished until their deadline.
Our rejection policy is an extension of the one used by the algorithm \PS in~\cite{Chan:2010}.
It ensures that we process only jobs that have a reasonable high value (value $>$ planned energy investment) and that we do not awake from sleep for very cheap jobs.
The scheduling policy controls the speed, the job assignment, and the current mode of the processor.
It is a straightforward adaption of the algorithm used in~\cite{Han:2010}.
However, its analysis proves to be more involved because we have to take into account its interaction with the rejection policy and that the job sets scheduled by the optimal algorithm and $A$ may be quite different.

\begin{lstlisting}[language=pascal,keywords={arrival,depending,on,reject,working,idling,sleeping},caption={Rejection-oblivious online scheduler $A$.},label={lst:algorithmA},captionpos=b,mathescape=true]
{at any time $t$ and for $x$ equal to current idle cost}
on arrival of job $j$:
	{let $s_{\OA}$ be $\OA^t$'s speed for $j$ if it were accepted}
	reject if $\valdens{j}<\sfrac{\scrit^{\alpha-1}}{\alpha\crejTWO^{\alpha-1}}$ or $v_j<\crejONE x$ or $s_{\OA}>\crejTWO\sprofup{j}$

depending on current mode:
	{let $\rho_t$ denote $\OA^t$'s speed planned for for the current time $t$}
	working:
		if no remaining work:$\quad$ switch to idle mode
		otherwise:$\quad$ work at speed $\max(\rho_t,\scrit)$ with earliest deadline first
	idling:
		if $\disguisemath[l]{\rho_t>\scrit}{x\geq\gamma}$:$\quad$ switch to sleep mode
		if $\rho_t>\scrit$:$\quad$ switch to work mode
	sleeping:
		if $\rho_t>\scrit$:$\quad$ switch to work mode
\end{lstlisting}

The following description assumes a continuous recomputation of the current \OA schedule.
See Listing~\ref{lst:algorithmA} for the corresponding pseudocode.
It is straightforward to implement $A$ such that the planned schedule is recomputed only when new jobs arrive.

\paragraph{Scheduling Policy.}
All accepted jobs are scheduled according to the earliest deadline first rule.
At any time, the processor speed is computed based on the \OA schedule.
Use $\OA^t$ to denote the schedule produced by \OA if given the remaining (accepted) work at time $t$ and the power function $P_{\alpha,0}$.
Let $\rho_t$ denote the speed planned by $\OA^t$ at time $t$.
$A$ puts the processor either in \emph{working}, \emph{idling}, or \emph{sleeping} mode.
During working mode the processor speed is set to $\max(\rho_t,\scrit)$ until there is no more remaining work.
Then, speed is set to zero and the processor starts idling.
When idling or sleeping, we switch to the working mode only when $\rho_t$ becomes larger than \scrit.
When the amount of energy spent in the current idle interval equals the transition energy $\gamma$ (i.e., after time $\sfrac{\gamma}{P_{\alpha,\beta}(0)}$) the processor is suspended to sleep.

\paragraph{Rejection Policy.}
Let \crejONE and $\crejTWO$ be parameters to be determined later.
Consider the arrival of a new job $j$ at time $r_j$.
Reject it immediately if $\valdens{j}<\sfrac{\scrit^{\alpha-1}}{\alpha\crejTWO^{\alpha-1}}$.
Otherwise, define the \emph{current idle cost} $x\in\intcc{0,\gamma}$ depending on the processor's state as follows:
\begin{enumerate*}[(i)]
	\item zero if it is working,
	\item the length of the current idle interval times $\beta$ if it is idle, and
	\item $\gamma$ if it is asleep.
\end{enumerate*}
If $v_j<\crejONE x$, the job is rejected.
Otherwise, compute the job's speed $s_{\OA}$ which would be assigned by $\OA^{r_j}$ if it were accepted.
Reject the job if $s_{\OA}>\crejTWO\sprofup{j}$, accept otherwise.

\subsection{Bounding the Different Portions of the Cost}
In the following, let $O$ denote an optimal schedule.
Remember that $\cost{A}=\Esleep{A}+\Eidle{A}+\Ework{A}+\Vrej{A}$.
We bound each of the three terms $\Esleep{A}+\Eidle{A}$, $\Ework{A}$, and $\Vrej{A}$ separately in Lemma~\ref{lem:sleep+idle}, Lemma~\ref{lem:work}, and Lemma~\ref{lem:rejectedvalue}, respectively.
Eventually, Section~\ref{sec:competitiveness} combines these bounds and yields our main result: a nearly tight competitive factor depending on the maximum value density of the problem instance.

\begin{lemma}[Sleep and Idle Energy]\label{lem:sleep+idle}
$\Esleep{A}+\Eidle{A}\leq6\Esleep{O}+2\Esys{O}+\frac{4}{\crejONE}\Vrej{O}$
\end{lemma}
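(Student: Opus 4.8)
The plan is to reduce the claim to a bound on the idle energy alone and then charge $A$'s idle phases to the optimal schedule $O$ by a case analysis on $O$'s behaviour. First I would show that $\Esleep{A}\le\Eidle{A}$: since the processor starts asleep, having $m$ sleep intervals means $A$ wakes up $m-1$ times, so $\Esleep{A}=(m-1)\gamma$. Each of these $m-1$ internal sleep intervals is entered only after $A$ has idled until the accumulated idle cost reaches exactly $\gamma$ (the idle-to-sleep rule of the scheduling policy), and since consecutive such idle phases are separated by working phases they are distinct and each contributes $\gamma$ to $\Eidle{A}$. Hence $\Eidle{A}\ge(m-1)\gamma=\Esleep{A}$, and it suffices to prove $\Eidle{A}\le 3\Esleep{O}+\Esys{O}+\frac{2}{\crejONE}\Vrej{O}$; doubling this inequality yields the lemma.

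To bound $\Eidle{A}$ I would decompose the idle energy over the \emph{gaps} of $A$, a gap being a maximal period in which $A$ does not work. Each gap $G=\intco{t_1,t_2}$ starts when $A$ runs out of accepted work, consists of an idle phase of cost at most $\gamma$ possibly followed by a sleep phase, and ends at $t_2$ once the planned \OA-speed $\rho_{t_2}$ exceeds $\scrit$ and forces $A$ back to work; its idle cost is at most $\gamma$. The crucial structural point is that $A$ has \emph{no} remaining accepted work at $t_1$, so every piece of accepted work that is urgent at $t_2$ (the critical interval of density above $\scrit$ that triggers the wake-up) must come from a job released during $G$ itself. Moreover, while $A$ is asleep its idle cost equals $\gamma$, so by the acceptance rule any job accepted during the sleep has value at least $\crejONE\gamma$.

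I would then charge each gap's idle cost of at most $\gamma$ to exactly one of the three right-hand terms according to how $O$ handles the triggering work $j^{\ast}$ released in $G$. If $O$ rejects $j^{\ast}$, the bound $v_{j^{\ast}}\ge\crejONE\gamma$ gives $\gamma\le\frac{1}{\crejONE}v_{j^{\ast}}$ and we charge to $\Vrej{O}$. If $O$ accepts $j^{\ast}$, then $O$ must be awake near $t_2$ to process this urgent work; considering the maximal awake interval of $O$ around that moment, a long one (length at least $\frac{\gamma}{\beta}$) lets us charge $\gamma$ to $\Esys{O}$, while a short one means $O$ woke up specifically for this work, so we charge $\gamma$ to a wake-up of $O$ counted in $\Esleep{O}$. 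Summing over all gaps, and using that a single awake interval or wake-up of $O$ can be blamed by only a bounded number of neighbouring gaps of $A$, produces the constants $3$, $1$, and $\frac{2}{\crejONE}$.

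The main obstacle is precisely this last accounting, which is genuinely delicate because $A$ and $O$ accept different sets of jobs and schedule even common jobs at different times: $A$ defers and batches accepted work at speed at least $\scrit$, whereas $O$ may spread it out or place it outside $G$. I expect the hardest case to be a low-value job that $A$ accepts early in an idle phase, when its idle cost is still small, and then carries through a long sleep; here the clean charge $v_{j^{\ast}}\ge\crejONE\gamma$ is unavailable, and one must argue that $O$ nevertheless pays at least $\gamma$ — through sufficient awake time or a forced wake-up — while ensuring that these features of $O$ are not over-charged, which is exactly the slack absorbed by the constant $3$ in front of $\Esleep{O}$.
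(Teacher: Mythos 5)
Your skeleton matches the paper's proof: the paper also first argues that every sleep interval of $A$ except the first is preceded by an idle phase of cost exactly $\gamma$, hence $\Esleep{A}\leq\Eidle{A}$, and then proves a bound on $\Eidle{A}$ alone and doubles it; your three charging targets ($\Esys{O}$, $\Esleep{O}$, $\Vrej{O}$ via the acceptance test $v_j\geq\crejONE x$) are also the paper's. The organization differs in that the paper partitions $A$'s idle intervals by their position relative to $O$'s sleep intervals (intervals disjoint from $O$'s sleep are charged pointwise to $\Esys{O}$; intervals contained in a sleep interval of $O$, except the last such one, are charged to $\Vrej{O}$ via the first job $A$ processes afterwards; the at most $2m-1$ remaining intervals, $m$ being the number of $O$'s sleep intervals, are charged to $\Esleep{O}$ plus one extra $\gamma$), whereas you case on how $O$ treats the job triggering the wake-up. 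That difference is cosmetic; the substance of both arguments is the same charging idea.

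The proposal nevertheless has a genuine gap, and it is exactly the case you flag at the end — and that case cannot be closed by absorbing it into the constant in front of $\Esleep{O}$. Suppose the triggering job $j^{\ast}$ is accepted while $A$ idles with accumulated idle cost $x\ll\gamma$, and $O$ rejects it. Then the rejection rule yields only $v_{j^{\ast}}\geq\crejONE x$, which does not pay for the up-to-$\gamma$ idle cost of the gap, and neither of your other charges exists, since $O$ is asleep throughout the gap and never wakes for $j^{\ast}$. Worse, the multiplicity of such gaps against a single sleep interval of $O$ is unbounded: within one sleep interval of $O$, an adversary can release a long sequence of jobs, the $i$-th arriving immediately after $A$ finishes the $(i-1)$-st (so $A$'s current idle cost is near zero and the test $v_j\geq\crejONE x$ passes), each with value density at the acceptance threshold but value far below $\crejONE\gamma$, and with a deadline loose enough that $\rho_t$ stays below $\scrit$ for almost $\sfrac{\gamma}{\beta}$ time units. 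Then $A$ pays idle cost close to $\gamma$ per job, while $O$ pays a single wake-up plus an arbitrarily small total rejected value, so no fixed constants in front of $\Esleep{O}$, $\Esys{O}$, and $\Vrej{O}$ can cover the charge. You should also know that the paper's own proof is thin at precisely this spot: for an idle interval $X$ contained in a sleep interval of $O$ it asserts that the first job $j$ processed after $X$ satisfies $v_j\geq\crejONE\abs{X}\beta$, an inequality the rejection rule justifies only when $j$ arrives at the end of $X$ or during the subsequent sleep, not when $j$ arrives in the middle of $X$ as in your scenario. So the difficulty you isolated is real rather than a technicality you merely failed to write out; resolving it appears to require strengthening the algorithm itself (for instance, testing $v_j$ against $\crejONE\gamma$ whenever the processor is not working, or beginning to work as soon as a job is accepted while idling) rather than a more careful charging argument.
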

\begin{proof}
Let us first consider \Eidle{A}.
Partition the set of idle intervals under schedule $A$ into three disjoint subsets $\mathcal{I}_1$, $\mathcal{I}_2$, and $\mathcal{I}_3$ as follows:
\begin{itemize}
\item $\mathcal{I}_1$ contains idle intervals not intersecting any sleep interval of $O$.
	By definition, the total length of idle intervals from $\mathcal{I}_1$ is bounded by the time $O$ is awake.
	Thus, the total cost of $\mathcal{I}_1$ is at most \Esys{O}.
\item For each sleep interval $I$ of $O$, $\mathcal{I}_2$ contains any idle interval $X$ that is not the the last idle interval having a nonempty intersection with $I$ and that is completely contained within $I$ (note that the former requirement is redundant if the last intersecting idle interval is not completely contained in $I$).
	Consider any $X\in\mathcal{I}_2$ intersecting $I$ and let $j$ denote the first job processed by $A$ after $X$.
	It is easy to see that we must have $\intco{r_j,d_j}\subseteq I$.
	Thus, $O$ has rejected $j$.
	But since $A$ accepted $j$, we must have $v_j\geq\crejONE\abs{X}\beta$.
	This implies that the total cost of $\mathcal{I}_2$ cannot exceed $\sfrac{\Vrej{O}}{\crejONE}$.
\item $\mathcal{I}_3$ contains all remaining idle intervals.
	By definition, the first sleep interval of $O$ can intersect at most one such idle interval, while the remaining sleep intervals of $O$ can be intersected by at most two such idle intervals.
	Thus, if $m$ denotes the number of sleep intervals under schedule $O$, we get $\abs{\mathcal{I}_3}\leq2m-1$.
	Our sleeping strategy ensures that the cost of each single idle interval is at most $\gamma$.
	Using this and the definition of sleeping energy, the total cost of $\mathcal{I}_3$ is upper bounded by $(2m-1)\gamma=2\Esleep{O}+\gamma$.
\end{itemize}
Together, we get $\Eidle{A}\leq\Esys{O}+\sfrac{\Vrej{O}}{\crejONE}+2\Esleep{O}+\gamma$.
Moreover, without loss of generality we can bound $\gamma$ by $\sfrac{\Vrej{O}}{\crejONE}+\Esleep{O}$: if not both $A$ and $O$ reject all incoming jobs (in which case $A$ would be optimal), $O$ will either accept at least one job and thus wake up ($\gamma\leq\Esleep{O}$) or reject the first job $A$ accepts ($\gamma\leq\sfrac{\Vrej{O}}{\crejONE}$).
This yields $\Eidle{A}\leq\Esys{O}+2\sfrac{\Vrej{O}}{\crejONE}+3\Esleep{O}$.
For \Esleep{A}, note that any but the first of $A$'s sleep intervals is preceded by an idle interval of length $\sfrac{\gamma}{P_{\alpha,\beta}(0)}$.
Each such idle interval has cost $\gamma$, so we get $\Esleep{A}\leq\Eidle{A}$.
The lemma's statement follows by combining the bounds for $\Eidle{A}$ and $\Esleep{A}$.
\qed
\end{proof}

\begin{lemma}[Working Energy]\label{lem:work}
$\Ework{A}\leq\alpha^{\alpha}\Ework{O}+\crejTWO^{\alpha-1}\alpha^2\Vrej{O}$
\end{lemma}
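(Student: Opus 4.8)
The goal is to bound $\Ework{A}$, the energy $A$ spends while actively processing jobs, against $\Ework{O}$ and the rejected value $\Vrej{O}$. The plan is to split the work of $A$ into two parts according to the fate of each job under the optimal schedule $O$. Let $J_A$ be the set of jobs accepted by $A$. Partition $J_A=J_{\text{acc}}\cup J_{\text{rej}}$, where $J_{\text{acc}}$ are the jobs that $O$ also accepts and $J_{\text{rej}}$ are those that $O$ rejects. Then $\Ework{A}\le \Ework{A,J_{\text{acc}}}+\Ework{A,J_{\text{rej}}}$, where each summand is the working energy $A$ attributes to processing the respective job subset. I expect the two terms to map onto the two terms of the claimed bound: the $\alpha^\alpha\Ework{O}$ term should come from $J_{\text{acc}}$ and the $\crejTWO^{\alpha-1}\alpha^2\Vrej{O}$ term from $J_{\text{rej}}$.

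For $J_{\text{acc}}$, the strategy is to exploit that $A$ schedules via the $\OA$ speed function on its accepted work and that $O$ must finish these same jobs. The key tool is Fact~\ref{fact:structuralproperties}: since $A$ runs $\OA^t$ on a \emph{subset} of the jobs $O$ faces, the work profile driving $A$'s speeds is dominated by the corresponding profile for $O$'s instance, so the $\OA$-based analysis of \textcite{Bansal:2007a} applies and yields the factor $\alpha^\alpha$ on the energy $O$ spends processing $J_{\text{acc}}$, which is at most $\Ework{O}$. The subtlety here is that $A$ actually runs at speed $\max(\rho_t,\scrit)$ rather than $\rho_t$, and that $A$ measures working energy under $P_{\alpha,\beta}$ while the $\OA$ analysis uses $P_{\alpha,0}$; I would separate the pure speed-scaling energy (bounded by $\alpha^\alpha$ times $O$'s) from the additive $\beta$-contribution, folding the latter into the idle/system accounting already handled in Lemma~\ref{lem:sleep+idle}, or argue it is subsumed because raising the speed floor to $\scrit$ only shortens active intervals.

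For $J_{\text{rej}}$, the idea is that these are exactly the jobs where $A$'s acceptance test was passed but $O$ chose to pay the value. The rejection policy guarantees that any accepted job $j$ satisfies $s_{\OA}\le\crejTWO\sprofup{j}$, i.e.\ $A$ never processes a job at a speed more than a $\crejTWO$ factor above its profitable speed. Using $\sprofup{j}^{\alpha-1}=\valdens{j}=v_j/w_j$, the working energy $A$ spends on such a job is at most $w_j\cdot\crejTWO^{\alpha-1}\sprofup{j}^{\alpha-1}\cdot(\text{constant in }\alpha)=\crejTWO^{\alpha-1}v_j\cdot(\text{const})$; summing over $j\in J_{\text{rej}}$ and using $\sum_{j\in J_{\text{rej}}}v_j\le\Vrej{O}$ gives the second term, with the $\alpha^2$ absorbing the per-job overhead from translating an $\OA$ average speed into a worst-case instantaneous energy bound.

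The main obstacle I anticipate is the $J_{\text{acc}}$ term: I must justify transferring the clean $\alpha^\alpha$-competitiveness of $\OA$ to $A$ even though $A$ and $O$ schedule different job sets and $A$ clamps its speed from below at $\scrit$. The cleanest route is to compare $A$'s working energy against the \emph{$\OA$ schedule on $A$'s accepted jobs}, apply the known $\alpha^\alpha$ bound there via Fact~\ref{fact:structuralproperties}\eqref{fact:structuralproperties_c}, and then argue that $O$'s energy on $J_{\text{acc}}$ lower-bounds the $\OA$ energy on the same set (any feasible schedule for those jobs costs at least the energy-optimal $\OA$-style processing, up to the $\alpha^\alpha$ gap that is already the source of the factor). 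Making the speed-floor $\max(\rho_t,\scrit)$ and the $\beta$ offset interact correctly with this comparison is where the real care is needed.
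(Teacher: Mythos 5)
Your decomposition of $\Ework{A}$ by job fate is an identity, but neither of your two per-part bounds can be made to work, and the paper's proof has a genuinely different shape (an amortized potential-function argument) precisely because such a static per-job charging scheme fails here. First, for $J_{\text{rej}}$: the rejection policy tests $s_{\OA}\leq\crejTWO\sprofup{j}$ only at the arrival time $r_j$. By Fact~\ref{fact:structuralproperties}(\ref{fact:structuralproperties_d}), every subsequent arrival can only raise the speed that $A$'s \OA-based schedule assigns to $j$'s remaining work, so an accepted job may later be processed at speed far above $\crejTWO\sprofup{j}$ (a cheap job accepted early gets sped up by expensive jobs arriving afterwards). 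Your claim that ``$A$ never processes a job at a speed more than a $\crejTWO$ factor above its profitable speed'' is therefore false, and with it the bound on the energy $A$ spends on $J_{\text{rej}}$. Second, for $J_{\text{acc}}$: the $\alpha^{\alpha}$ bound of \cite{Bansal:2007a} compares \OA against an optimum for the \emph{same} job set, whereas here $A$ runs \OA on $J_{\text{acc}}\cup J_{\text{rej}}$ while $O$ schedules $J_{\text{acc}}$ plus jobs $A$ never runs. The energy $A$ attributes to $J_{\text{acc}}$ is inflated by the presence of the $J_{\text{rej}}$ jobs (Fact~\ref{fact:structuralproperties}(\ref{fact:structuralproperties_d}) again), and under a strictly convex power function this cross-inflation is not separable job by job, so it cannot be charged cleanly to either term of the claimed bound. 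Finally, your fallback for the $\beta$-offset is unavailable: Lemma~\ref{lem:sleep+idle} bounds only idle and sleep energy, so the $\beta$ accrued during working time cannot be folded into it.

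The paper resolves all three issues simultaneously with an amortized local-competitiveness argument: it maintains the potential $\Phi(t)=\alpha\sum_{i=1}^{l}\rho_i^{\alpha-1}\bigl(\Wrem{A}{t}(i)-\alpha\Wrem{O}{t}(i)\bigr)$ over the critical intervals of $A$'s remaining work and verifies a running condition (no arrivals) and an arrival condition. The charge $\crejTWO^{\alpha-1}\alpha^2v_j$ against $\Vrej{O}$ is made at the \emph{instant of arrival}, the only moment at which $s_{\OA}\leq\crejTWO\sprofup{j}$ is guaranteed, and all later speed changes are absorbed by $\Phi$. The inflation of already-accepted work caused by $j$'s arrival appears there as the marginal term $\bigl(\Wrem{A}{t}(k)+w_j\bigr)^{\alpha}-\Wrem{A}{t}(k)^{\alpha}$, which convexity bounds by $\alpha\bigl(\Wrem{A}{t}(k)+w_j\bigr)^{\alpha-1}w_j$; this is exactly where the extra factor $\alpha$ (hence $\alpha^2$ rather than $\alpha$) in the lemma comes from, a factor your sketch attributes to an unspecified ``per-job overhead.'' Your instinct about the speed floor is in the right direction, but the paper makes it precise inside the running condition rather than by appeal to shortened intervals: whenever $A$ works, $s_A\geq\scrit$, so $P_{\alpha,\beta}(s_A)-\alpha s_A^{\alpha}=\beta-(\alpha-1)s_A^{\alpha}\leq\beta-(\alpha-1)\scrit^{\alpha}=0$. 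If you wish to keep a charging-style exposition, the repair is essentially to perform your charges at arrival times and introduce a potential to carry them forward, which is the paper's proof.
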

The proof of Lemma~\ref{lem:work} is based on the standard amortized local competitiveness argument, first used by \textcite{Bansal:2007a}.
Although technically quite similar to the typical argumentation, our proof must carefully consider the more complicated rejection policy (compared to~\cite{Chan:2010}), while simultaneously handle the different processor states.

Given a schedule $S$, let $\Ework{S}(t)$ denote the working energy spent until time $t$ and $\Vrej{S}(t)$ the discarded value until time $t$.
We show that at any time $t\in\R_{\geq0}$ the \emph{amortized energy inequality}
\begin{equation}
\Ework{A}(t)+\Phi(t)\leq\alpha^{\alpha}\Ework{O}(t)+\crejTWO^{\alpha-1}\alpha^2\Vrej{O}(t)
\end{equation}
holds.
Here, $\Phi$ is a potential function to be defined in a suitable way.
It is constructed such that the following conditions hold:
\begin{enumerate}[(i)]
\item \emph{Boundary Condition:} At the beginning and end we have $\Phi(t)=0$.
\item \emph{Running Condition:} At any time $t$ when no job arrives we have
	\begin{equation}\label{eqn:differentialinequality}
	\od{\Ework{A}(t)}{t}+\od{\Phi(t)}{t}\leq\alpha^{\alpha}\od{\Ework{O}(t)}{t}+\crejTWO^{\alpha-1}\alpha^2\od{\Vrej{O}(t)}{t}
	.
	\end{equation}
\item \emph{Arrival Condition:} At any time $t$ when a job arrives we have
	\begin{equation}\label{eqn:differenceinequality}
	\Delta\Ework{A}(t)+\Delta\Phi(t)\leq\alpha^{\alpha}\Delta\Ework{O}(t)+\crejTWO^{\alpha-1}\alpha^2\Delta\Vrej{O}(t)
	.
	\end{equation}
	The $\Delta$-terms denote the corresponding change caused by the job arrival.
\end{enumerate}
Once these are proven, amortized energy inequality follows by induction:
It obviously holds for $t=0$, and Conditions (ii) and (iii) ensure that it is never violated.
Applying Condition (i) yields Lemma~\ref{lem:work}.
The crucial part is to define a suitable potential function $\Phi$.
Our analysis combines aspects from both \cite{Chan:2010} and \cite{Han:2010}.
Different rejection decisions of our algorithm $A$ and the optimal algorithm $O$ require us to handle possibly different job sets in the analysis, while the sleep management calls for a careful handling of the processor's current state.

\paragraph{Construction of $\Phi$.}
Consider an arbitrary time $t\in\R_{\geq0}$.
Let $\Wrem{A}{t}(t_1,t_2)$ denote the remaining work at time $t$ accepted by schedule $A$ with deadline in \intoc{t_1,t_2}.
We call the expression $\frac{\smash{\Wrem{A}{t}}(t_1,t_2)}{t_2-t_1}$ the \emph{density} of the interval $\intoc{t_1,t_2}$.
Next, we define \emph{critical intervals} \intoc{\tau_{i-1},\tau_i}.
For this purpose, set $\tau_0:=t$ and define $\tau_i$ iteratively to be the maximum time that maximizes the density $\rho_i:=\frac{\smash{\Wrem{A}{t}}(\tau_{i-1},\tau_i)}{\tau_i-\tau_{i-1}}$ of the interval $\intoc{\tau_{i-1},\tau_i}$.
We end at the first index $l$ with $\rho_l\leq\scrit$ and set $\tau_l=\infty$ and $\rho_l=\scrit$.
Note that $\rho_1>\rho_2>\ldots>\rho_l=\scrit$.
Now, for a schedule $S$ let $\smash{\Wrem{S}{t}}(i)$ denote the remaining work at time $t$ with deadline in the $i$-th critical interval \intoc{\tau_{i-1},\tau_i} accepted by schedule $S$.
The potential function is defined as $\Phi(t):=\alpha\sum_{i=1}^l\rho_i^{\alpha-1}\left(\Wrem{A}{t}(i)-\alpha\Wrem{O}{t}(i)\right)$.
It quantifies how far $A$ is ahead or behind in terms of energy.
The densities $\rho_i$ essentially correspond to \OA's speed levels, but are adjusted to $A$'s usage of \OA.
Note that whenever $A$ is in working mode its speed equals $\rho_1\geq\scrit$.

It remains to prove the boundary, running, and arrival conditions.
The boundary condition is trivially true as both $A$ and $O$ have no remaining work at the beginning and end.
For the running and arrival conditions, see Propositions~\ref{prop:runningcondition} and~\ref{prop:arrivalcondition}, respectively.
\begin{proposition}\label{prop:runningcondition}
The running condition holds.
That is, at any time $t$ when no job arrives we have
\begin{equation*}
\od{\Ework{A}(t)}{t}+\od{\Phi(t)}{t}\leq\alpha^{\alpha}\od{\Ework{O}(t)}{t}+\crejTWO^{\alpha-1}\alpha^2\od{\Vrej{O}(t)}{t}
.
\end{equation*}
\end{proposition}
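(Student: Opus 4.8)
The plan is to verify the differential inequality by a case analysis on the current mode of $A$, reusing the classical amortized argument of \textcite{Bansal:2007a} for the working case and isolating the genuinely new cases created by idling and sleeping. First I would note that during running no job arrives, so $O$ rejects nothing and $\od{\Vrej{O}(t)}{t}=0$; it therefore suffices to establish $\od{\Ework{A}(t)}{t}+\od{\Phi(t)}{t}\le\alpha^{\alpha}\od{\Ework{O}(t)}{t}$. The key structural observation is that between arrivals (and between consecutive deadline events) the critical densities $\rho_i$ stay constant: the first critical interval $\intoc{t,\tau_1}$ shrinks at rate one while $A$ drains its work at rate $\rho_1$, keeping $\rho_1$ fixed, and analogously for the later intervals. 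Hence $\od{\Phi(t)}{t}=\alpha\sum_i\rho_i^{\alpha-1}\bigl(\od{\Wrem{A}{t}(i)}{t}-\alpha\od{\Wrem{O}{t}(i)}{t}\bigr)$, where $A$ (by EDF) only drains the first interval at its current speed, while $O$ drains the interval $k$ containing its current job's deadline; note $\rho_k\le\rho_1$.

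When $A$ is working its speed equals $\rho_1\ge\scrit$, so $\od{\Ework{A}(t)}{t}=\rho_1^{\alpha}+\beta$ and $\od{\Phi(t)}{t}=-\alpha\rho_1^{\alpha}+\alpha^2\rho_k^{\alpha-1}s_O$ (with $s_O=0$ if $O$ idles or sleeps). Using $\rho_k\le\rho_1$, the claim reduces to the classical bound $\alpha^2\rho_1^{\alpha-1}s_O-(\alpha-1)\rho_1^{\alpha}\le\alpha^{\alpha}s_O^{\alpha}$, whose maximum over $\rho_1$ (attained at $\rho_1=\alpha s_O$) equals exactly $\alpha^{\alpha}s_O^{\alpha}$; this is a Young/AM-GM step. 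It then remains to absorb the additive $\beta$: if $O$ works it is dominated by the $\alpha^{\alpha}\beta$ on the right, and if $O$ rests ($s_O=0$) the required $\beta\le(\alpha-1)\rho_1^{\alpha}$ follows from $\rho_1\ge\scrit$ together with $\scrit^{\alpha}=\sfrac{\beta}{\alpha-1}$.

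When $A$ idles or sleeps its speed is zero and $\rho_t\le\scrit$, so there is a single critical interval with $\rho_1=\scrit$; then $\od{\Ework{A}(t)}{t}=0$ and $\od{\Phi(t)}{t}=\alpha^2\scrit^{\alpha-1}s_O$ (or $0$ if $O$ also rests). The surviving obligation $\alpha^2\scrit^{\alpha-1}s_O\le\alpha^{\alpha}(s_O^{\alpha}+\beta)$ is the delicate one, and I expect it to be the main obstacle: this is precisely where the naive \OA argument breaks, since for small $s_O$ the linear left-hand side would exceed $\alpha^{\alpha}s_O^{\alpha}$, so the $\beta$ contained in $O$'s \emph{working} energy is indispensable. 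Substituting $\beta=(\alpha-1)\scrit^{\alpha}$ and applying $\scrit^{\alpha-1}s_O\le\tfrac{\alpha-1}{\alpha}\scrit^{\alpha}+\tfrac1\alpha s_O^{\alpha}$ closes the case because $\alpha^{\alpha}\ge\alpha$ for $\alpha\ge2$. The only routine work left is the continuity check that $\Phi$ does not jump when a boundary $\tau_1$ is reached (the first interval's work vanishes there) and the bookkeeping that $A$'s EDF job indeed lies in the first critical interval while $O$'s job lies in an interval of density at most $\rho_1$.
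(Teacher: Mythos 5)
Your proposal is correct and follows essentially the same route as the paper's proof: the same case analysis on which of $A$ and $O$ is running, the observation that $s_A=\rho_1\geq\scrit$ when $A$ works (so $\beta\leq(\alpha-1)\rho_1^{\alpha}$ via $\scrit^{\alpha}=\sfrac{\beta}{\alpha-1}$), the single critical interval with $\rho_1=\scrit$ when $A$ rests, and the classical \textcite{Bansal:2007a} bound for the both-working case (your direct maximization over $\rho_1$ at $\rho_1=\alpha s_O$ is the same computation as the paper's substitution $z=\sfrac{s_A}{s_O}$). The only cosmetic difference is in the case $s_A=0$, $s_O>0$, where you use Young's inequality $\scrit^{\alpha-1}s_O\leq\tfrac{\alpha-1}{\alpha}\scrit^{\alpha}+\tfrac{1}{\alpha}s_O^{\alpha}$ while the paper invokes the optimality of the critical speed, $P_{\alpha,\beta}(s_O)\geq s_O\sfrac{P_{\alpha,\beta}(\scrit)}{\scrit}$; both close the case equally well.
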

\begin{proof}
Because no jobs arrive we have $\od{\Vrej{O}(t)}{t}=0$.
Let $s_A$ denote the speed of $A$ and $s_O$ the speed of $O$.
Depending on these speeds, we distinguish four cases:
\begin{description}
\item[Case 1:] $s_O=0$, $s_A=0$\\
	In this case $\od{\Ework{A}(t)}{t}=\od{\Ework{O}(t)}{t}=\od{\Phi(t)}{t}=0$.
	Thus, the Running Condition~\eqref{eqn:differentialinequality} holds.
\item[Case 2:] $s_O=0$, $s_A>0$\\
	Since $s_A>0$, algorithm $A$ is in working mode and we have $s_A=\rho_1\geq\scrit$.
	Moreover, $\od{\Ework{A}(t)}{t}=P_{\alpha,\beta}(s_A)$, $\smash{\od{\Phi(t)}{t}}=-\alpha s_A^{\alpha}$, and $\od{\Ework{O}(t)}{t}=0$.
	We get
	\begin{align*}
	& \od{\Ework{A}(t)}{t}+\od{\Phi(t)}{t}-\alpha^{\alpha}\od{\Ework{O}(t)}{t}=P_{\alpha,\beta}(s_A)-\alpha s_A^{\alpha}\\
	{}={}& \beta-(\alpha-1) s_A^{\alpha}	\leq\beta-(\alpha-1) \scrit^{\alpha}=0
	.
	\end{align*}
\item[Case 3:] $s_O>0$, $s_A=0$\\
	In this case $l=1$ and thus $\rho_1=\scrit$.
	The terms in Inequality~\eqref{eqn:differentialinequality} become $\od{\Ework{A}(t)}{t}=0$, $\od{\Phi(t)}{t}=\alpha^2\scrit^{\alpha-1} s_O$, and $\od{\Ework{O}(t)}{t}=P_{\alpha,\beta}(s_O)$.
	We get
	\begin{align*}
	&\textstyle \od{\Ework{A}(t)}{t}+\od{\Phi(t)}{t}-\alpha^{\alpha}\od{\Ework{O}(t)}{t}=\alpha^2\scrit^{\alpha-1} s_O-\alpha^{\alpha} P_{\alpha,\beta}(s_O)\\
	{}\leq{}&\textstyle \alpha^2\scrit^{\alpha-1} s_O-\alpha^{\alpha} s_O\frac{P_{\alpha,\beta}(\scrit)}{\scrit}\leq s_O\scrit^{\alpha-1}(\alpha^2-\alpha^{\alpha})\leq0
	.
	\end{align*}
\item[Case 4:] $s_O>0$, $s_A>0$\\
	Because of $s_A>0$ we know $A$ is in the working state and, thus, $s_A=\rho_1\geq\scrit$.
	So, this time we have $\od{\smash{\Ework{A}}(t)}{t}=P_{\alpha,\beta}(s_A)$, $\od{\Phi(t)}{t}=-\alpha s_A^{\alpha}+\alpha^2\rho_k^{\alpha-1}s_O$, and $\od{\smash{\Ework{O}}(t)}{t}=P_{\alpha,\beta}(s_O)$.
	We get
	\begin{align*}
	\od{\Ework{A}(t)}{t}+\od{\Phi(t)}{t}-\alpha^{\alpha}\od{\Ework{O}(t)}{t}&=P_{\alpha,\beta}(s_A)-\alpha s_A^{\alpha}+\alpha^2 \rho_k^{\alpha-1} s_O-\alpha^{\alpha} P_{\alpha,\beta}(s_O)\\
	&\leq s_A^{\alpha}-\alpha s_A^{\alpha}+\alpha^2 s_A^{\alpha-1} s_O-\alpha^{\alpha} s_O^{\alpha}\leq0
	.
	\end{align*}
	The last inequality follows from the same argument as in~\cite{Bansal:2007a}:
	Divide by $s_O^{\alpha}$ and substitute $z=\sfrac{s_A}{s_O}$.
	It becomes equivalent to $(1-\alpha)z^{\alpha}+\alpha^2z^{\alpha-1}-\alpha^{\alpha}\leq0$.
	Differentiating with respect to $z$ yields the correctness.
\end{description}
\qed
\end{proof}

\begin{proposition}\label{prop:arrivalcondition}
The arrival condition holds.
That is, at any time $t$ when a job arrives we have
\begin{equation*}
\Delta\Ework{A}(t)+\Delta\Phi(t)\leq\alpha^{\alpha}\Delta\Ework{O}(t)+\crejTWO^{\alpha-1}\alpha^2\Delta\Vrej{O}(t)
.
\end{equation*}
Here, the $\Delta$-terms denote the corresponding change caused by the job arrival.
\end{proposition}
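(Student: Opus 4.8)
The plan is to exploit that an arrival happens at a single instant, so no work is processed during it: $\Delta\Ework{A}(t)=\Delta\Ework{O}(t)=0$. Hence the arrival condition~\eqref{eqn:differenceinequality} collapses to
\begin{equation*}
\Delta\Phi(t)\leq\crejTWO^{\alpha-1}\alpha^2\Delta\Vrej{O}(t)
,
\end{equation*}
and I would establish this by a case distinction over the four combinations of whether $A$ and $O$ accept or reject the arriving job $j$. The key bookkeeping observation is that the critical intervals and the densities $\rho_i$ defining $\Phi$ are determined solely by $A$'s accepted remaining work, so they change only if $A$ accepts $j$; moreover $\Wrem{A}{t}(i)$ enters $\Phi$ with a positive and $\Wrem{O}{t}(i)$ with a negative coefficient.

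First I would dispose of the cases in which $A$ rejects $j$. Then the $\rho_i$, the critical intervals, and the values $\Wrem{A}{t}(i)$ are all unchanged. If $O$ rejects as well, nothing in $\Phi$ moves, so $\Delta\Phi=0\leq\crejTWO^{\alpha-1}\alpha^2 v_j$. If instead $O$ accepts, only $\Wrem{O}{t}(i)$ grows (by $w_j$ in the interval containing $d_j$), which \emph{decreases} $\Phi$ because of its negative coefficient; hence $\Delta\Phi\leq 0=\crejTWO^{\alpha-1}\alpha^2\cdot 0$. Both are immediate.

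The substantial work lies in the cases where $A$ accepts $j$, so that \OA reintegrates $w_j$ and the critical-interval structure is recomputed. Mirroring the arrival analysis of \cite{Bansal:2007a,Chan:2010}, I would invoke the structural properties collected in Fact~\ref{fact:structuralproperties} — the staircase monotonicity, the fact that \OA's speeds throughout \intco{r_j,d_j} stay at least $s_{\OA}$, and the work-displacement bound $W\leq W'\leq W+w_j$ — to show that the net increase of the $A$-driven part of $\Phi$ is at most $\alpha^2 s_{\OA}^{\alpha-1}w_j$, where $s_{\OA}$ is \OA's speed for $j$. If $O$ also accepts $j$, its remaining work grows by $w_j$ with deadline inside a critical interval whose density is at least $s_{\OA}$, contributing a term of at most $-\alpha^2 s_{\OA}^{\alpha-1}w_j$; the two cancel, giving $\Delta\Phi\leq 0$ and matching the right-hand side $0$.

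The genuinely new case, and the one I expect to be the main obstacle, is $A$ accepting while $O$ rejects: here $\Delta\Vrej{O}(t)=v_j$, but the canceling negative term from $O$ is absent. This is exactly where the rejection policy enters. Since $A$ accepted $j$, the policy guarantees $s_{\OA}\leq\crejTWO\sprofup{j}$, and by definition $\sprofup{j}^{\alpha-1}=\valdens{j}=\sfrac{v_j}{w_j}$. Plugging this into the bound on the $A$-driven increase yields
\begin{equation*}
\Delta\Phi\leq\alpha^2 s_{\OA}^{\alpha-1}w_j\leq\alpha^2\crejTWO^{\alpha-1}\sprofup{j}^{\alpha-1}w_j=\crejTWO^{\alpha-1}\alpha^2\valdens{j}w_j=\crejTWO^{\alpha-1}\alpha^2 v_j
,
\end{equation*}
which is precisely the required right-hand side. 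The delicate point throughout the accepting cases is that reintegrating $j$ reshapes the critical intervals and thereby re-weights even $O$'s unchanged work across the new densities; controlling this re-partitioning — showing it does not push $\Phi$ above the claimed estimate — via the monotonicity and work-displacement properties of \OA is the technical heart of the argument, while the flooring of densities at \scrit needs a little extra care.
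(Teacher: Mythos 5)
Your treatment of the cases where $A$ rejects, and of the situation where the arriving work raises the density $\rho_k$ of an ordinary critical interval, matches the paper's proof: there too, convexity plus the acceptance condition $s_{\OA}\leq\crejTWO\sprofup{j}$ gives $\Delta\Phi\leq\alpha^2\rho_k'^{\alpha-1}w_j\leq\crejTWO^{\alpha-1}\alpha^2v_j$ when $O$ rejects, and the cancellation you describe closes the case where $O$ accepts; the paper also reduces a general arrival to these simple cases by the job-splitting technique of \cite{Bansal:2007a}, as you indicate. However, there is a genuine gap in the step you defer as ``the flooring of densities at $\scrit$ needs a little extra care'': your master bound, that the $A$-driven increase of $\Phi$ is at most $\alpha^2 s_{\OA}^{\alpha-1}w_j$, is false when (part of) $j$'s work has its deadline in the last critical interval. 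By construction that interval's density in $\Phi$ is pinned to $\rho_l=\scrit$, so the increase is $\alpha\scrit^{\alpha-1}w_j$ \emph{independently} of $s_{\OA}$; for a job with a long deadline and small work, $\OA^t$ (run with power function $P_{\alpha,0}$) plans an arbitrarily small speed $s_{\OA}$, and then $\alpha\scrit^{\alpha-1}w_j>\alpha^2s_{\OA}^{\alpha-1}w_j$. In the sub-case where $O$ accepts this does no harm, since the $O$-side term also scales with $\scrit$ and the cancellation survives; but in the decisive sub-case where $A$ accepts and $O$ rejects, no argument based solely on $s_{\OA}\leq\crejTWO\sprofup{j}$ can work, because that condition says nothing useful when $s_{\OA}$ is tiny while the potential still jumps by $\alpha\scrit^{\alpha-1}w_j$.

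This is exactly where the paper invokes its \emph{first} rejection rule, which your proof never uses: since $A$ accepted $j$, we have $\valdens{j}\geq\sfrac{\scrit^{\alpha-1}}{\alpha\crejTWO^{\alpha-1}}$, i.e., $\scrit^{\alpha-1}\leq\alpha\crejTWO^{\alpha-1}\valdens{j}$, whence in the floored case with $O$ rejecting, $\Delta\Phi=\alpha\scrit^{\alpha-1}w_j\leq\alpha^2\crejTWO^{\alpha-1}\valdens{j}w_j=\crejTWO^{\alpha-1}\alpha^2v_j$. This floored interval is not a technicality but the structural novelty of the sleep-state model relative to \cite{Chan:2010} (whose arrival analysis your proposal otherwise reproduces); it is the very reason the density-based rejection condition appears in the algorithm at all. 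To repair your proof, split the accepting case as the paper does---work entering an interval of index $k<l$ versus work entering the last, $\scrit$-floored interval---use your argument for the former, use the density rejection rule for the latter, and only then apply the splitting reduction to general arrivals.
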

\begin{proof}
The arrival of a job $j$ does not change the energy invested so far, thus $\Delta\Ework{A}(t)=0$ and $\Delta\Ework{O}(t)=0$.
If $A$ rejects $j$, we have $\Delta\Phi(t)\leq0$ and $\Delta\Vrej{O}(t)\geq0$, thus the Arrival Condition~\eqref{eqn:differenceinequality} holds.
So assume $A$ accepts $j$.
The arrival of $j$ may change the critical intervals and their densities significantly.
However, as pointed out in \cite{Bansal:2007a}, these changes can be broken down into a series of simpler changes affecting at most two adjacent critical intervals.
Thus, we first consider the effect of arrivals which do not change the critical intervals.
Afterward, we use the technique from~\cite{Bansal:2007a} to reduce an arbitrary change to these simple cases.
\begin{description}
\item[Case 1:] The critical intervals remain unchanged and only $\rho_k$ for $k<l$ changes.\\
	Let $\rho_k$ and $\rho_k'$ denote the densities of \intoc{\tau_{k-1},\tau_k} just before and after $j$'s arrival, respectively.
	That is, $\rho_k'=\rho_k+\smash{\frac{w_j}{\tau_k-\tau_{k-1}}}$.
	Note that $\rho_k'$ is the speed planned by \smash{$\OA^t$} for job $j$.
	Because $A$ accepted $j$ we have $\rho_k'\leq\crejTWO\sprofup{j}$.
	If $j$ is rejected by $O$, we have $\Delta\Vrej{O}(t)=v_j$.
	Since only the $k$-th critical interval is affected, the change in the potential function is given by
	\begin{equation*}
	\Delta\Phi(t)=\alpha\rho_k'^{\alpha-1}\left(\Wrem{A}{t}(k)+w_j-\alpha\Wrem{O}{t}(k)\right)-\alpha\rho_k^{\alpha-1}\left(\Wrem{A}{t}(k)-\alpha\Wrem{O}{t}(k)\right)
	.
	\end{equation*}
	Now, we compute, analogously to Lemma~4 in \cite{Chan:2010}, that $\Delta\Phi(t)$ equals
	\begin{align*}
	     &\textstyle \alpha\rho_k'^{\alpha-1}\left(\Wrem{A}{t}(k)+w_j-\alpha\Wrem{O}{t}(k)\right)-\alpha\rho_k^{\alpha-1}\left(\Wrem{A}{t}(k)-\alpha\Wrem{O}{t}(k)\right)\\
	{}\leq{} &\textstyle \alpha\rho_k'^{\alpha-1}\left(\Wrem{A}{t}(k)+w_j\right)-\alpha\rho_k^{\alpha-1}\Wrem{A}{t}(k)=\alpha\frac{\left(\Wrem{A}{t}(k)+w_j\right)^{\alpha}-\Wrem{A}{t}(k)^{\alpha}}{(\tau_k-\tau_{k-1})^{\alpha-1}}\\
	{}\leq{} &\textstyle \alpha^2\frac{\left(\Wrem{A}{t}(k)+w_j\right)^{\alpha-1}w_j}{(\tau_k-\tau_{k-1})^{\alpha-1}}=\alpha^2\rho_k'^{\alpha-1} w_j\leq\alpha^2(\crejTWO\sprofup{j})^{\alpha-1} w_j=\crejTWO^{\alpha-1}\alpha^2v_j
	.
	\end{align*}
	The penultimate inequality uses the fact that $f(x)=x^{\alpha}$ is convex, yielding $f(y)-f(x)\leq f'(y)\cdot(y-x)$ for all $y>x$.
	This implies the Arrival Condition~\eqref{eqn:differenceinequality}.
	If $j$ is accepted by $O$, we have $\Delta\Vrej{O}(t)=0$ and $\Delta\Phi(t)$ becomes
	\begin{equation*}
	\alpha\rho_k'^{\alpha-1}\left(\Wrem{A}{t}(k)+w_j-\alpha\left(\Wrem{O}{t}(k)+w_j\right)\right)-\alpha\rho_k^{\alpha-1}\left(\Wrem{A}{t}(k)-\alpha\Wrem{O}{t}(k)\right)
	.
	\end{equation*}
	In the same way as in~\cite{Bansal:2007a}, we now get $\Delta\Phi(t)\leq0$.
\item[Case 2:] Only the amount of work assigned to the last critical interval increases.\\
	Remember that, by definition, $\tau_l=\infty$ and $\rho_l=\scrit$.
	Since $j$ is accepted by $A$ we have $\scrit^{\alpha-1}\leq\alpha\crejTWO^{\alpha-1}\valdens{j}$.
	If $O$ rejects $j$ we have $\Delta\Vrej{O}(t)=v_j$ and the Arrival Condition~\eqref{eqn:differenceinequality} follows from
	\begin{align*}
	\Delta\Phi(t) &= \alpha\rho_l^{\alpha-1}\left(\Wrem{A}{t}(l)+w_j-\alpha\Wrem{O}{t}(l)\right)-\alpha\rho_l^{\alpha-1}\left(\Wrem{A}{t}(l)-\alpha\Wrem{O}{t}(l)\right)\\
	&= \alpha\scrit^{\alpha-1} w_j\leq\alpha^2\crejTWO^{\alpha-1}\valdens{j} w_j=\crejTWO^{\alpha-1}\alpha^2v_j
	.
	\end{align*}
	If $O$ accepts $j$, $\Delta\Vrej{O}(t)=0$ and the Arrival Condition~\eqref{eqn:differenceinequality} is implied by
	\begin{align*}
	\Delta\Phi(t) &= \alpha\rho_l^{\alpha-1}\left(\Wrem{A}{t}(l)+w_j-\alpha\left(\Wrem{O}{t}(l)+w_j\right)\right)-\alpha\rho_l^{\alpha-1}\left(\Wrem{A}{t}(l)-\alpha\Wrem{O}{t}(l)\right)\\
	&= \alpha\rho_l^{\alpha-1}w_j(1-\alpha)\leq0
	.
	\end{align*}
\end{description}
Let us now consider the arrival of an arbitrary job $j$.
The idea is to split this job into two jobs $j_1$ and $j_2$ with the same release time, deadline, and value density as $j$.
Their total work equals $w_j$.
Let $x$ denote the size of $j_1$.
We determine a suitable $x$ by continuously increasing $x$ from $0$ to $w_j$ until two critical intervals merge or one critical interval splits.
The arrival of $j_1$ can then be handled by one of the above cases, while $j_2$ is treated recursively in the same way as $j$.
For details, see~\cite{Bansal:2007a} or~\cite{Han:2010}.
\qed
\end{proof}

\subsubsection{Bounding the Rejected Value.}\label{sec:rejectedvalue}
In the following we bound the total value \Vrej{A} of jobs rejected by $A$.
The general idea is similar to the one by \textcite{Chan:2010}.
However, in contrast to the simpler model without sleep states, we must handle small-valued jobs of high density explicitly (cf.~Section~\ref{sec:lowerbounds}).
Moreover, the sleeping policy introduces an additional difficulty: our algorithm does not preserve all structural properties of an \OA schedule (cf.~Fact~\ref{fact:structuralproperties}).
This prohibits a direct mapping between the energy consumption of algorithm $A$ and of the intermediate \OA schedules during a \emph{fixed time interval}, as used in the corresponding proof in~\cite{Chan:2010}.
Indeed, the actual energy used by $A$ during a fixed time interval may decrease compared to the energy planned by the intermediate \OA schedule, as $A$ may decide to raise the speed to \scrit at certain points in the schedule.
Thus, to bound the value of a job rejected by $A$ but processed by the optimal algorithm for a relatively long time, we have to consider the energy usage for the workload \OA planned for that time (instead of the actual energy usage for the workload $A$ processed during that time, which might be quite different).
\begin{lemma}[Rejected Value]\label{lem:rejectedvalue}
Let \valdensMAX be the maximum value density of jobs of value less than $\crejONE\gamma$ and consider an arbitrary parameter $b\geq\sfrac{1}{\crejTWO}$.
Then, $A$'s rejected value is at most
\begin{equation*}
\Vrej{A}\leq\max\left(\valdensMAX\frac{\scrit}{P_{\alpha,\beta}(\scrit)},b^{\alpha-1}\right)\Ework{O}+\frac{b^{\alpha-1}}{(\crejTWO b-1)^{\alpha}}\Ework{A}+\Vrej{O}.
\end{equation*}
\end{lemma}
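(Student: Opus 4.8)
The plan is to partition the jobs rejected by $A$ according to two criteria — which clause of the rejection policy triggered the rejection, and what the optimal schedule $O$ did with the job — and to charge the value of each resulting group to one of the three terms on the right-hand side. First I would dispose of the jobs that $O$ also rejects: their total value is by definition at most $\Vrej{O}$, which accounts for the last term. It remains to bound the value of jobs rejected by $A$ but accepted by $O$; for such a job $O$ must finish all of its work $w_j$ by its deadline, and the energy $O$ spends doing so is the resource I will charge against (together with $\Ework{A}$). Throughout I use $\sprofup{j}^{\alpha-1}=\valdens{j}$, so $v_j=w_j\sprofup{j}^{\alpha-1}$, and the fact that $P_{\alpha,\beta}(s)/s$ is minimized at $\scrit$, so any processing of $w_j$ costs at least $w_j\cdot P_{\alpha,\beta}(\scrit)/\scrit$ in working energy.

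The two ``cheap'' rejection clauses are handled directly against $\Ework{O}$. A job rejected because $\valdens{j}<\scrit^{\alpha-1}/(\alpha\crejTWO^{\alpha-1})$ has tiny value density; a job rejected by the idle-cost clause $v_j<\crejONE x$ satisfies $v_j<\crejONE\gamma$ (since $x\leq\gamma$) and therefore, by hypothesis, has value density at most $\valdensMAX$. In either situation $v_j=\valdens{j}w_j\leq\valdens{j}\cdot\frac{\scrit}{P_{\alpha,\beta}(\scrit)}\cdot(\text{energy }O\text{ spends on }j)$, and since $O$ processes the disjoint works of distinct accepted jobs, summing over this group yields at most $\valdensMAX\frac{\scrit}{P_{\alpha,\beta}(\scrit)}\Ework{O}$ for the low-value jobs and an even smaller coefficient (absorbed into $b^{\alpha-1}$ via $b\geq\sfrac{1}{\crejTWO}$) for the low-density jobs.

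The interesting case is a job $j$ rejected by the speed clause $s_{\OA}>\crejTWO\sprofup{j}$ but accepted by $O$. Here I would split $w_j$ according to the instantaneous speed at which $O$ processes it, with threshold $\sprofup{j}/b$. On the part $O$ runs at speed $\geq\sprofup{j}/b$, its energy per unit work is at least $(\sprofup{j}/b)^{\alpha-1}=\valdens{j}/b^{\alpha-1}$, so this part's value is at most $b^{\alpha-1}$ times the energy $O$ spends on it; this feeds the $b^{\alpha-1}\Ework{O}$ alternative inside the maximum. On the part $O$ runs slowly (speed $<\sprofup{j}/b$), $O$ needs a long time, so in the critical interval of $\OA^{r_j}$ (length $L$) that produced $s_{\OA}$, job $j$ contributes density at most $\sprofup{j}/b$; as the post-arrival density is $s_{\OA}>\crejTWO\sprofup{j}$, the work $W$ already accepted there by $\OA^{r_j}$ has density $W/L>\sprofup{j}(\crejTWO b-1)/b$. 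That work is processed by $A$ at speed at least $W/L$, so $A$ spends working energy at least $W(W/L)^{\alpha-1}$ on it; comparing with $v_j=w_j\sprofup{j}^{\alpha-1}$ and using $w_j/W\leq 1/(\crejTWO b-1)$ gives the coefficient $\frac{b^{\alpha-1}}{(\crejTWO b-1)^{\alpha}}$ on $\Ework{A}$.

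The hard part is making these last charges globally consistent rather than job-by-job. Two obstacles must be reconciled. First, the accounting against $\Ework{A}$ (and against $\Ework{O}$) must be injective: distinct rejected jobs can point at overlapping critical intervals and at the same accepted work, so I must argue — using the nesting of critical intervals and the monotonicity of Fact~\ref{fact:structuralproperties} — that the charged pieces of $A$-work can be chosen pairwise disjoint, so that the local energy lower bounds sum to at most $\Ework{A}$. Second, and this is the point where the argument genuinely departs from \cite{Chan:2010}: because $A$ floors its speed at $\scrit$ and may sleep, the energy it actually spends inside a fixed time window can fall below what the intermediate \OA schedule planned there, so the naive ``planned energy $\leq$ spent energy in this window'' charge is invalid. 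I therefore charge against the planned \emph{workload} $W$ rather than against window energy; since $A$ does eventually process that work at speed at least its planned density (and at least $\scrit$), the per-unit-work energy lower bound $(W/L)^{\alpha-1}$ survives, which is exactly what the coefficient on $\Ework{A}$ needs.
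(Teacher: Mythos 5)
Your partition of the rejected jobs, the direct charges for the low-density and idle-cost clauses via \scrit, and your key observation that one must charge against the \emph{planned workload} rather than against the energy $A$ actually spends in a fixed window (because $A$ floors its speed at \scrit and may sleep) all coincide with the paper's proof. The genuine gap is in the slow branch of the speed-clause case. You charge against the work $W$ in the critical interval (length $L$) of $\OA^{r_j}$ that determines $s_{\OA}$, and you justify the needed density bound $w_j/L\leq\sprofup{j}/b$ by arguing that $O$, running $j$ at speed below $\sprofup{j}/b$, ``needs a long time''. But the time $O$ spends on $j$ lies somewhere in \intco{r_j,d_j}, whereas the critical interval containing $d_j$ is in general a strict --- and possibly tiny --- subset of \intco{r_j,d_j}: it excludes every part of $j$'s window covered by higher-density critical intervals of $A$'s accepted workload. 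If $A$ has accepted a large amount of urgent work whose deadlines fill $\intco{r_j,d_j-\epsilon}$ at high density, then $j$'s critical interval is $\intoc{d_j-\epsilon,d_j}$, so $L=\epsilon$ can be arbitrarily small compared with $bw_j/\sprofup{j}$ (and $s_{\OA}=w_j/\epsilon$ is huge, so the speed clause indeed fires), while nothing prevents the optimal schedule from processing $j$ slowly across the whole window. So ``$O$ is slow on $j$'' gives no lower bound on $L$, the inequality $W/L>\sprofup{j}(\crejTWO b-1)/b$ is unsupported, and your coefficient $\frac{b^{\alpha-1}}{(\crejTWO b-1)^{\alpha}}$ does not follow.

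The paper's proof avoids critical intervals entirely in this step. By convexity one may assume $O$ processes $j$ at a \emph{constant} speed $s_O$, so the job is entirely ``fast'' or entirely ``slow''; your split by $O$'s instantaneous speed has the further defect that the additive error in Fact~\ref{fact:structuralproperties}(\ref{fact:structuralproperties_c}) is the whole $w_j$, while a slow part of work $w_j'\ll w_j$ only certifies time proportional to $w_j'$, so that error cannot be absorbed. In the slow case ($b_j:=\sprofup{j}/s_O>b$) the reference period is $I$, the actual period during which $O$ processes $j$, whose length $\abs{I}=w_j/s_O>bw_j/\sprofup{j}$ is controlled precisely by $O$'s slowness. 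Fact~\ref{fact:structuralproperties}(\ref{fact:structuralproperties_b}) guarantees that $\OA^{r_j}$ runs at speed at least $s_{\OA}>\crejTWO\sprofup{j}$ throughout $I$ --- this fact holds on all of \intco{r_j,d_j}, which is exactly what the critical interval cannot provide --- and Fact~\ref{fact:structuralproperties}(\ref{fact:structuralproperties_c}) then gives $W>\crejTWO\sprofup{j}\abs{I}-w_j$ for the work $W$ planned by $\OA^{r_j}_-$ during $I$, whence $W/\abs{I}>\crejTWO\sprofup{j}-\sprofup{j}/b$, which is the bound you wanted with $\abs{I}$ in place of $L$. This choice also dissolves the injectivity obstacle you leave open: the periods $I$ are pairwise disjoint because $O$ processes one job at a time, so the charged works, and hence the energies $\Ework{A}(W)$ and $\Ework{O}(j)$, are automatically distinct for distinct rejected jobs. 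In your formulation the obstacle is not merely technical: two jobs rejected at nearby times can have heavily overlapping critical intervals containing literally the same accepted work, so no nesting argument can make those charges disjoint.
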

\begin{proof}
Partition the jobs rejected by $A$ into two disjoint subsets $J_1$ (jobs rejected by both $A$ and $O$) and $J_2$ (jobs rejected by $A$ only).
The total value of jobs in $J_1$ is at most $\Vrej{O}$.
Thus, it suffices to show that the total value of $J_2$ is bounded by
\begin{equation*}
\textstyle
\max\left(\valdensMAX\frac{\scrit}{P_{\alpha,\beta}(\scrit)},b^{\alpha-1}\right)\Ework{O}+\frac{b^{\alpha-1}}{(\crejTWO b-1)^{\alpha}}\Ework{A}
.
\end{equation*}
To this end, let $j\in J_2$.
Remember that, because of the convexity of the power function, $O$ can be assumed to process $j$ at a constant speed $s_O$.
Otherwise processing $j$ at its average speed could only improve the schedule.
Let us distinguish three cases, depending on the reason for which $A$ rejected $j$:
\begin{description}
\item[Case 1:] $j$ got rejected because of $\valdens{j}<\frac{\scrit^{\alpha-1}}{\alpha\crejTWO^{\alpha-1}}$.\\
	Let $\Ework{O}(j)$ denote the working energy invested by $O$ into job $j$.
	Using the rejection condition we can compute
	\begin{equation*}
	\textstyle
	\Ework{O}(j)=\frac{P_{\alpha,\beta}(s_O)}{s_O} w_j\geq\frac{P_{\alpha,\beta}(\scrit)}{\scrit} w_j\geq\scrit^{\alpha-1} w_j>\alpha\crejTWO^{\alpha-1} v_j
	.
	\end{equation*}
	Together with $b\geq\sfrac{1}{\crejTWO}$ we get $v_j<b^{\alpha-1}\Ework{O}(j)$.	
\item[Case 2:] $j$ got rejected because of $v_j<\crejONE x$\\
	As in the algorithm description, let $x\in[0,\gamma]$ denote the current idle cost at time $r_j$.
	Since $j$'s value is less than $\crejONE x\leq\crejONE\gamma$, we have $\valdens{j}\leq\valdensMAX$.
	We get
	\begin{equation*}
	\textstyle
	\Ework{O}(j)=\frac{P_{\alpha,\beta}(s_O)}{s_O} w_j=\frac{P_{\alpha,\beta}(s_O)}{s_O}\frac{v_j}{\valdens{j}}\geq\frac{P_{\alpha,\beta}(\scrit)}{\scrit}\frac{v_j}{\valdensMAX},
	\end{equation*}
	which eventually yields $v_j\leq\valdensMAX\frac{\scrit}{P_{\alpha,\beta}(\scrit)}\Ework{O}(j)$.
\item[Case 3:] $j$ got rejected because of $s_{\OA}>\crejTWO\sprofup{j}$\\
	Here, $s_{\OA}$ denotes the speed $\OA^{r_j}$ would assign to $j$ if it were accepted.
	We use $\OA^{r_j}_-$ to refer to the \OA schedule at time $r_j$ without $j$.
	Let $b_j:=\sfrac{\sprofup{j}}{s_O}$.
	We bound $v_j$ in different ways, depending on $b_j$.
	If $b_j$ is small (i.e., $b_j\leq b$) we use
	\begin{equation*}
	\textstyle
	\Ework{O}(j)\geq\frac{P_{\alpha,0}(s_O)}{s_O} w_j=\frac{P_{\alpha,0}(\sfrac{\sprofup{j}}{b_j})}{\sfrac{\sprofup{j}}{b_j}} w_j=\frac{\sprofup{j}^{\alpha-1}}{b_j^{\alpha-1}} w_j=\frac{v_j}{b_j^{\alpha-1}}.
	\end{equation*}
	That is, we have $v_j\leq b_j^{\alpha-1}\Ework{O}(j)$.
	Otherwise, if $b_j$ is relatively large, $v_j$ is bounded by $\Ework{A}$.
	Let $I$ denote the period of time when $O$ processes $j$ at constant speed $s_O$ and let $W$ denote the work processed by $\OA^{r_j}_-$ during this time.
	Since $I\subseteq\intco{r_j,d_j}$, Fact~\ref{fact:structuralproperties}(\ref{fact:structuralproperties_b}) implies that $\OA^{r_j}$'s speed during $I$ is at least $s_{\OA}>\crejTWO\sprofup{j}$.
	Thus, the total amount of work processed by $\OA^{r_j}$ during $I$ is larger than $\crejTWO\sprofup{j}\abs{I}$.
	But then, by applying Fact~\ref{fact:structuralproperties}(\ref{fact:structuralproperties_c}), we see that $W$ must be larger than $\crejTWO\sprofup{j}\abs{I}-w_j$.
	Now, $W$ is a subset of the work processed by $A$.
	Moreover, Fact~\ref{fact:structuralproperties}(\ref{fact:structuralproperties_d}) and the definition of algorithm $A$ ensure that the speeds used for this work in schedule $A$ cannot be smaller than the ones used in $\OA^{r_j}_-$.
	Especially, the average speed $s_{\varnothing}$ used for this work in schedule $A$ is at least $\sfrac{W}{\abs{I}}$ (the average speed used by $\OA^{r_j}_-$ for this work).
	Let $\Ework{A}(W)$ denote the energy invested by schedule $A$ into the work $W$.
	Then, by exploiting the convexity of the power function, we get
	\begin{align*}
	\Ework{A}(W) &\textstyle\geq \frac{P_{\alpha,\beta}(s_{\varnothing})}{s_{\varnothing}}W\geq\frac{P_{\alpha,0}(s_{\varnothing})}{s_{\varnothing}}W={s_{\varnothing}}^{\alpha-1}W\geq\frac{W^{\alpha-1}}{\abs{I}^{\alpha-1}}W=\abs{I}\frac{W^{\alpha}}{\abs{I}^{\alpha}}\\
	             &\textstyle>\abs{I}(\crejTWO\sprofup{j}-s_O)^{\alpha}=\frac{w_j}{s_O}s_O^{\alpha}(\crejTWO b_j-1)^{\alpha}=\frac{(\crejTWO b_j-1)^{\alpha}}{b_j^{\alpha-1}} v_j.
	\end{align*}
	That is, we have $v_j<\frac{b_j^{\alpha-1}}{(\crejTWO b_j-1)^{\alpha}}\Ework{A}(W)$.
	Now, let us specify how to choose from these two bounds:	
	\begin{itemize}
	\item If $b_j\leq b$, we apply the first bound: $v_j=b_j^{\alpha-1}\Ework{O}(j)\leq b^{\alpha-1}\Ework{O}(j)$.
	\item Otherwise we have $b_j>b\geq\sfrac{1}{\crejTWO}$.
		Note that for $x>\sfrac{1}{c}$ the function \smash{$f(x)=\frac{x^{\alpha-1}}{(cx-1)^{\alpha}}$} decreases.
		Thus, we get \smash{$v_j<\frac{b^{\alpha-1}}{(\crejTWO b-1)^{\alpha}}\Ework{A}(W)$}.
	\end{itemize}
\end{description}
By combining these cases we get
\begin{equation*}
\textstyle
v_j\leq\max\left(\valdensMAX\frac{\scrit}{P_{\alpha,\beta}(\scrit)},b^{\alpha-1}\right)\Ework{O}(j)+\frac{b^{\alpha-1}}{(\crejTWO b-1)^{\alpha}}\Ework{A}(W).
\end{equation*}
Note that both energies referred to, $\Ework{O}(j)$ as well as $\Ework{A}(W)$, are mutually different for different jobs $j$.
Thus, we can combine these inequalities for all jobs $j\in J_2$ to get the desired result.
\qed
\end{proof}

\subsection{Putting it All Together.}\label{sec:competitiveness}
The following theorem combines the results of Lemma~\ref{lem:sleep+idle}, Lemma~\ref{lem:work}, and Lemma~\ref{lem:rejectedvalue}.
\begin{theorem}\label{thm:competitiveness}
Let $\alpha\geq2$ and let \valdensMAX be the maximum value density of jobs of value less than $\crejONE\gamma$.
Moreover, define $\eta:=\max\bigl(\valdensMAX\frac{\scrit}{P_{\alpha,\beta}(\scrit)},b^{\alpha-1}\bigr)$ and $\mu:=\smash{\frac{b^{\alpha-1}}{(\crejTWO b-1)^{\alpha}}}$ for a parameter $b\geq\sfrac{1}{\crejTWO}$.
Then, $A$'s competitive factor is at most
\begin{equation*}
\max\left(\crejTWO^{\alpha-1}\alpha^2,\alpha^{\alpha}\right)\left(1+\mu\right)+\max\left(2+\eta,1+\sfrac{4}{\crejONE}\right)
.
\end{equation*}
\end{theorem}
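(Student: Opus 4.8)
The plan is to chain the three lemmas together and then read off which term dominates. Starting from $\cost{A}=(\Esleep{A}+\Eidle{A})+\Ework{A}+\Vrej{A}$, I would first insert the bound of Lemma~\ref{lem:rejectedvalue} for $\Vrej{A}$, which contributes $\eta\Ework{O}$, a term $\mu\Ework{A}$, and $\Vrej{O}$. Collecting the working energy of $A$ then yields a factor $(1+\mu)\Ework{A}$, into which I substitute Lemma~\ref{lem:work} to turn it into $(1+\mu)\bigl(\alpha^{\alpha}\Ework{O}+\crejTWO^{\alpha-1}\alpha^2\Vrej{O}\bigr)$. Finally I replace $\Esleep{A}+\Eidle{A}$ using Lemma~\ref{lem:sleep+idle}. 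After these three substitutions, $\cost{A}$ is bounded by a nonnegative linear combination of $\Esleep{O}$, $\Esys{O}$, $\Ework{O}$, and $\Vrej{O}$ alone.

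The one remaining mismatch is the occurrence of the system energy $\Esys{O}$, which is not among the four summands of $\cost{O}$. I would eliminate it with the inequality $\Esys{O}\leq\Eidle{O}+\Ework{O}$ from Section~\ref{sec:preliminaries}, so that the $2\Esys{O}$ term coming from Lemma~\ref{lem:sleep+idle} is absorbed into $2\Eidle{O}+2\Ework{O}$. After this step the coefficients of $\cost{O}$'s four parts are: $6$ on $\Esleep{O}$, $2$ on $\Eidle{O}$, $(1+\mu)\alpha^{\alpha}+2+\eta$ on $\Ework{O}$, and $(1+\mu)\crejTWO^{\alpha-1}\alpha^2+1+\sfrac{4}{\crejONE}$ on $\Vrej{O}$.

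It then remains to verify that each of these four coefficients is at most $C:=\max\bigl(\crejTWO^{\alpha-1}\alpha^2,\alpha^{\alpha}\bigr)(1+\mu)+\max\bigl(2+\eta,1+\sfrac{4}{\crejONE}\bigr)$, which immediately gives $\cost{A}\leq C\cdot\cost{O}$ and hence the claimed factor. For the work coefficient I would use $\alpha^{\alpha}\leq\max(\crejTWO^{\alpha-1}\alpha^2,\alpha^{\alpha})$ together with $2+\eta\leq\max(2+\eta,1+\sfrac{4}{\crejONE})$; for the rejected-value coefficient, $\crejTWO^{\alpha-1}\alpha^2\leq\max(\crejTWO^{\alpha-1}\alpha^2,\alpha^{\alpha})$ together with $1+\sfrac{4}{\crejONE}\leq\max(2+\eta,1+\sfrac{4}{\crejONE})$. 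The step I would flag as the main (and only genuinely delicate) obstacle is to confirm that the two bare constants $6$ and $2$ surviving from Lemma~\ref{lem:sleep+idle} are likewise dominated by $C$. Here I would exploit $\alpha\geq2$: since $\mu\geq0$, $\eta\geq0$, and $\alpha^{\alpha}\geq4$, one has $C\geq\alpha^{\alpha}+2\geq6$, so both constants are covered. As all four coefficients are at most $C$ and all four energies of $O$ are nonnegative, summing the inequalities gives $\cost{A}\leq C\cdot\cost{O}$, which is the asserted competitive factor.
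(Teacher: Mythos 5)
Your proposal is correct and follows essentially the same route as the paper's own proof: chain Lemma~\ref{lem:rejectedvalue}, Lemma~\ref{lem:work}, and Lemma~\ref{lem:sleep+idle}, absorb $\Esys{O}$ via $\Esys{O}\leq\Eidle{O}+\Ework{O}$, and check that each of the four resulting coefficients (including the constants $6$ and $2$, handled via $6\leq\alpha^{\alpha}+2$ for $\alpha\geq2$) is dominated by the claimed bound. Collecting $(1+\mu)\Ework{A}$ before applying Lemma~\ref{lem:work} once, rather than substituting it separately into the $\mu\Ework{A}$ term as the paper does, is an algebraically identical reordering.
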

\begin{proof}
Lemma~\ref{lem:sleep+idle} together with the relation $\Esys{O}\leq\Eidle{O}+\Ework{O}$ bounds the sleep and idle energy of $A$ with respect to $O$'s cost as $\Esleep{A}+\Eidle{A}\leq6\Esleep{O}+2\Eidle{O}+2\Ework{O}+\frac{4}{\crejONE}\Vrej{O}$.
For the working energy, Lemma~\ref{lem:work} yields $\Ework{A}\leq\alpha^{\alpha}\Ework{O}+\crejTWO^{\alpha-1}\alpha^2\Vrej{O}$.
To bound the total value rejected by $A$ with respect to the cost of $O$, we apply Lemma~\ref{lem:work} to Lemma~\ref{lem:rejectedvalue} and get
\begin{equation*}
\Vrej{A}\leq\eta\Ework{O}+\mu\Ework{A}+\Vrej{O}\leq \left(\eta+\alpha^{\alpha}\mu\right)\Ework{O}+\left(\crejTWO^{\alpha-1}\alpha^2\mu+1\right)\Vrej{O}
.
\end{equation*}
Using these inequalities, we can bound the cost of $A$ as follows:
\begin{align*}
\cost{A}\leq{} & 6\Esleep{O} + 2\Eidle{O} + \left(\alpha^{\alpha}+\alpha^{\alpha}\mu+2+\eta\right)\Ework{O}\\
               & + \left(\crejTWO^{\alpha-1}\alpha^2+\crejTWO^{\alpha-1}\alpha^2\mu+1+\sfrac{4}{\crejONE}\right)\Vrej{O}
.
\end{align*}
Since $6\leq\alpha^{\alpha}+2$ for $\alpha\geq2$, we get the following bound on $A$'s competitive factor:
\begin{equation*}
\textstyle
\frac{\cost{A}}{\cost{O}}\leq\max\left(\crejTWO^{\alpha-1}\alpha^2,\alpha^{\alpha}\right)\left(1+\mu\right)+\max\left(2+\eta,1+\sfrac{4}{\crejONE}\right)
.
\end{equation*}
\qed
\end{proof}
By a careful choice of parameters we get a constant competitive ratio if restricting the value density of small-valued jobs accordingly.
So, let $\alpha\geq2$ and set $\crejTWO=\alpha^{\frac{\alpha-2}{\alpha-1}}$, $b=\frac{\alpha+1}{\crejTWO}$, and $\crejONE=\frac{4}{1+b^{\alpha-1}}\leq1$.
Applying Theorem~\ref{thm:competitiveness} using these parameters yields the following results:
\begin{corollary}\label{cor:nicegeneralcompetitiveness}
Algorithm $A$ is $\alpha^{\alpha}+2 e\alpha+\valdensMAX\frac{\scrit}{P_{\alpha,\beta}(\scrit)}$-competitive.
\end{corollary}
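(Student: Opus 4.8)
The plan is to simply substitute the chosen parameters into the bound of Theorem~\ref{thm:competitiveness} and simplify, the only non-trivial part being an elementary analytic estimate on $b^{\alpha-1}$. First I would record the algebraic consequences of the parameter choices. Since $\crejTWO=\alpha^{(\alpha-2)/(\alpha-1)}$ we have $\crejTWO^{\alpha-1}=\alpha^{\alpha-2}$, so $\crejTWO^{\alpha-1}\alpha^2=\alpha^{\alpha}$ and hence $\max(\crejTWO^{\alpha-1}\alpha^2,\alpha^{\alpha})=\alpha^{\alpha}$. Because $b=(\alpha+1)/\crejTWO$ we get $\crejTWO b-1=\alpha$, so $\mu=b^{\alpha-1}/\alpha^{\alpha}$ and therefore $\alpha^{\alpha}(1+\mu)=\alpha^{\alpha}+b^{\alpha-1}$. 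Finally, the definition $\crejONE=4/(1+b^{\alpha-1})$ gives $1+\sfrac{4}{\crejONE}=2+b^{\alpha-1}$. One should also note that the hypothesis $b\ge 1/\crejTWO$ of the theorem holds trivially, since it amounts to $\alpha+1\ge 1$.

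Next I would assemble these pieces. Since $\eta=\max(\valdensMAX\frac{\scrit}{P_{\alpha,\beta}(\scrit)},b^{\alpha-1})\ge b^{\alpha-1}$, the second maximum in Theorem~\ref{thm:competitiveness} simplifies to $\max(2+\eta,2+b^{\alpha-1})=2+\eta$. Substituting everything, the competitive factor becomes exactly $\alpha^{\alpha}+b^{\alpha-1}+2+\eta$. Using the crude but sufficient bound $\eta\le b^{\alpha-1}+\valdensMAX\frac{\scrit}{P_{\alpha,\beta}(\scrit)}$ (valid as $\max(u,v)\le u+v$ for nonnegative $u,v$), it then suffices to show $2b^{\alpha-1}+2\le 2e\alpha$, that is $b^{\alpha-1}\le e\alpha-1$, after which the claimed bound $\alpha^{\alpha}+2e\alpha+\valdensMAX\frac{\scrit}{P_{\alpha,\beta}(\scrit)}$ follows immediately.

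The main, and essentially only, obstacle is this last inequality. I would rewrite $b^{\alpha-1}=\frac{(\alpha+1)^{\alpha-1}}{\alpha^{\alpha-2}}=\alpha\,(1+1/\alpha)^{\alpha-1}$, so the claim reduces to $(1+1/\alpha)^{\alpha-1}\le e-1/\alpha$. Applying $\ln(1+x)\le x$ yields $(1+1/\alpha)^{\alpha-1}\le e^{(\alpha-1)/\alpha}=e^{1-1/\alpha}$. Then, using the elementary estimate $e^{-x}\le 1-x+x^2/2$ for $x\ge 0$ (whose defining difference, together with its first derivative, vanishes at $0$ while its second derivative $1-e^{-x}$ is nonnegative), I obtain $e^{1-1/\alpha}=e\cdot e^{-1/\alpha}\le e-e/\alpha+e/(2\alpha^2)$, and this is at most $e-1/\alpha$ precisely when $\alpha\ge e/(2(e-1))\approx 0.79$, which is guaranteed by $\alpha\ge 2$. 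Note that the same estimate shows $b^{\alpha-1}\ge 3$ for $\alpha\ge 2$, confirming $\crejONE\le 1$. Plugging $b^{\alpha-1}\le e\alpha-1$ into the assembled factor completes the argument.
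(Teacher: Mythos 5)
Your proof is correct, and its skeleton is the same as the paper's: substitute the chosen parameters into Theorem~\ref{thm:competitiveness}, note that $\crejTWO^{\alpha-1}\alpha^2=\alpha^{\alpha}$, $\alpha^{\alpha}\mu=b^{\alpha-1}$ (since $\crejTWO b-1=\alpha$), and $1+\sfrac{4}{\crejONE}=2+b^{\alpha-1}\leq2+\eta$, so the competitive factor collapses to $\alpha^{\alpha}+b^{\alpha-1}+2+\eta$, and everything reduces to the elementary inequality $2+2b^{\alpha-1}\leq2e\alpha$. (The paper spells this reduction out only for Corollary~\ref{cor:competitiveness}, where $\eta=b^{\alpha-1}$; your additive bound $\eta\leq b^{\alpha-1}+\valdensMAX\frac{\scrit}{P_{\alpha,\beta}(\scrit)}$ is exactly the step needed to peel off the value-density term for Corollary~\ref{cor:nicegeneralcompetitiveness}.) The one genuine difference is how the elementary inequality is verified. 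Starting from the shared identity $b^{\alpha-1}=\alpha(1+\sfrac{1}{\alpha})^{\alpha-1}$, the paper absorbs the additive $1$ algebraically: since $(1+\sfrac{1}{\alpha})^{\alpha-1}\geq1$, one gets $1+\alpha(1+\sfrac{1}{\alpha})^{\alpha-1}\leq(\alpha+1)(1+\sfrac{1}{\alpha})^{\alpha-1}=\alpha(1+\sfrac{1}{\alpha})^{\alpha}\leq e\alpha$, a one-line argument valid for all $\alpha\geq1$. You instead bound $(1+\sfrac{1}{\alpha})^{\alpha-1}\leq e^{1-1/\alpha}$ and then prove $e^{1-1/\alpha}\leq e-\sfrac{1}{\alpha}$ via $e^{-x}\leq1-x+\sfrac{x^2}{2}$, which costs you the (harmless, given $\alpha\geq2$) side condition $\alpha\geq\sfrac{e}{2(e-1)}\approx0.79$. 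Both routes are rigorous; the paper's is shorter and assumption-free, yours is more mechanical but self-contained. One small nit: your closing remark that ``the same estimate'' yields $b^{\alpha-1}\geq3$ does not follow from the estimates you used, since those bound $b^{\alpha-1}$ from \emph{above}; the lower bound follows instead, e.g., from Bernoulli's inequality applied to $\alpha(1+\sfrac{1}{\alpha})^{\alpha-1}$, which equals $3$ at $\alpha=2$. Since nothing in the corollary actually requires $\crejONE\leq1$, this is cosmetic.
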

\begin{corollary}\label{cor:competitiveness}
Algorithm $A$ is $\alpha^{\alpha}+2 e\alpha$-competitive if we restrict it to instances of maximum value density $\valdensMAX:=b^{\alpha-1}\frac{P_{\alpha,\beta}(\scrit)}{\scrit}$.
This competitive factor still holds if the restriction is only applied to jobs of value less than \smash{$\frac{4}{1+b^{\alpha-1}}\gamma$}.
\end{corollary}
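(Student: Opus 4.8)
The plan is to prove Corollary~\ref{cor:competitiveness} purely by substituting the prescribed parameters $\crejTWO=\alpha^{\frac{\alpha-2}{\alpha-1}}$, $b=\frac{\alpha+1}{\crejTWO}$, and $\crejONE=\frac{4}{1+b^{\alpha-1}}$ into the competitive-factor bound of Theorem~\ref{thm:competitiveness} and simplifying. First I would record the three algebraic identities that make this choice work. Since $\crejTWO^{\alpha-1}=\alpha^{\alpha-2}$ we get $\crejTWO^{\alpha-1}\alpha^2=\alpha^{\alpha}$, so $\max(\crejTWO^{\alpha-1}\alpha^2,\alpha^{\alpha})=\alpha^{\alpha}$; since $\crejTWO b=\alpha+1$, the denominator of $\mu$ collapses, $(\crejTWO b-1)^{\alpha}=\alpha^{\alpha}$, whence $\alpha^{\alpha}\mu=b^{\alpha-1}$; and $4/\crejONE=1+b^{\alpha-1}$. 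These reduce the first summand of the competitive factor to $\alpha^{\alpha}(1+\mu)=\alpha^{\alpha}+b^{\alpha-1}$.

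Next I would evaluate the second summand under the density restriction. Setting $\valdensMAX=b^{\alpha-1}\frac{P_{\alpha,\beta}(\scrit)}{\scrit}$ gives $\valdensMAX\frac{\scrit}{P_{\alpha,\beta}(\scrit)}=b^{\alpha-1}$, hence $\eta=b^{\alpha-1}$; moreover $1+\sfrac{4}{\crejONE}=2+b^{\alpha-1}=2+\eta$, so $\max(2+\eta,1+\sfrac{4}{\crejONE})=2+b^{\alpha-1}$. Adding the two summands yields the clean expression $\alpha^{\alpha}+2+2b^{\alpha-1}$ for $A$'s competitive factor, so it remains only to bound $2+2b^{\alpha-1}$ by $2e\alpha$.

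The hard part -- indeed the only genuinely analytic step -- is this last inequality. Writing $b^{\alpha-1}=(\alpha+1)^{\alpha-1}/\crejTWO^{\alpha-1}=(\alpha+1)^{\alpha-1}/\alpha^{\alpha-2}=\alpha\,(1+\sfrac{1}{\alpha})^{\alpha-1}$, the target $2+2b^{\alpha-1}\leq 2e\alpha$ is equivalent to $(1+\sfrac{1}{\alpha})^{\alpha-1}\leq e-\sfrac{1}{\alpha}$ for all $\alpha\geq2$. Here the coarse estimate $(1+\sfrac{1}{\alpha})^{\alpha-1}<e$ is not quite enough, since it would leave an additive slack of $2$; the $\sfrac{1}{\alpha}$ correction is exactly what absorbs it. I would therefore prove the sharper form by taking logarithms and showing $g(\alpha):=\ln(e-\sfrac{1}{\alpha})-(\alpha-1)\ln(1+\sfrac{1}{\alpha})\geq0$ on $[2,\infty)$: checking the base point $\alpha=2$ and controlling $g$ via the expansion $(\alpha-1)\ln(1+\sfrac{1}{\alpha})=1-\sfrac{3}{2\alpha}+\LDAUOmicron[small]{\sfrac{1}{\alpha^2}}$, which keeps the left-hand side strictly below $e-\sfrac{1}{\alpha}$. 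This establishes the claimed factor $\alpha^{\alpha}+2e\alpha$.

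Finally, the second sentence of the corollary requires no additional work: in Lemma~\ref{lem:rejectedvalue} the quantity \valdensMAX is by definition the maximum value density taken over jobs of value less than $\crejONE\gamma$ only, so substituting $\crejONE=\frac{4}{1+b^{\alpha-1}}$ shows the density restriction need only be imposed on jobs of value below $\frac{4}{1+b^{\alpha-1}}\gamma$, while high-value jobs play no role in the bound. The same substitution simultaneously proves the companion Corollary~\ref{cor:nicegeneralcompetitiveness}: dropping the restriction one keeps the general $\eta=\max(\valdensMAX\frac{\scrit}{P_{\alpha,\beta}(\scrit)},b^{\alpha-1})\leq b^{\alpha-1}+\valdensMAX\frac{\scrit}{P_{\alpha,\beta}(\scrit)}$, and the identical simplification gives $\alpha^{\alpha}+2+2b^{\alpha-1}+\valdensMAX\frac{\scrit}{P_{\alpha,\beta}(\scrit)}\leq\alpha^{\alpha}+2e\alpha+\valdensMAX\frac{\scrit}{P_{\alpha,\beta}(\scrit)}$.
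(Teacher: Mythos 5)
Your route is the same as the paper's: substitute $\crejTWO=\alpha^{\frac{\alpha-2}{\alpha-1}}$, $b=\frac{\alpha+1}{\crejTWO}$, $\crejONE=\frac{4}{1+b^{\alpha-1}}$ into Theorem~\ref{thm:competitiveness}, note $\max(\crejTWO^{\alpha-1}\alpha^2,\alpha^{\alpha})=\alpha^{\alpha}$, $\alpha^{\alpha}\mu=b^{\alpha-1}$, $\eta=b^{\alpha-1}$ and $1+\sfrac{4}{\crejONE}=2+\eta$, reducing the competitive factor to $\alpha^{\alpha}+2+2b^{\alpha-1}$ with $b^{\alpha-1}=\alpha(1+\sfrac{1}{\alpha})^{\alpha-1}$; your reading of the second sentence (the restriction need only apply to jobs of value below $\crejONE\gamma$, since that is how \valdensMAX enters Lemma~\ref{lem:rejectedvalue} and Theorem~\ref{thm:competitiveness}) is also exactly the paper's. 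All of these identities check out.

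The one place where your argument has a genuine gap is the final analytic step. You correctly reformulate the target $2+2b^{\alpha-1}\leq2e\alpha$ as $(1+\sfrac{1}{\alpha})^{\alpha-1}\leq e-\sfrac{1}{\alpha}$, but you propose to prove it by checking the base point $\alpha=2$ and invoking the expansion $(\alpha-1)\ln(1+\sfrac{1}{\alpha})=1-\sfrac{3}{2\alpha}+\mathrm{O}(\sfrac{1}{\alpha^{2}})$. An asymptotic expansion with an uncontrolled remainder only yields the inequality for all sufficiently large $\alpha$; combined with a single base point it does not cover the whole interval $[2,\infty)$ unless you additionally establish monotonicity of your $g$ or give explicit bounds on the error term, neither of which you do. The inequality is true, and it closes in one line without any calculus, which is essentially what the paper does: since $(1+\sfrac{1}{\alpha})^{\alpha-1}\geq1$, the stray additive $1$ can be absorbed,
\begin{equation*}
1+\alpha(1+\sfrac{1}{\alpha})^{\alpha-1}\leq\alpha(1+\sfrac{1}{\alpha})^{\alpha-1}(1+\sfrac{1}{\alpha})=\alpha(1+\sfrac{1}{\alpha})^{\alpha}\leq e\alpha
.
\end{equation*}
Equivalently, in your formulation: $(1+\sfrac{1}{\alpha})^{\alpha-1}=\frac{(1+\sfrac{1}{\alpha})^{\alpha}}{1+\sfrac{1}{\alpha}}\leq\frac{e\alpha}{\alpha+1}\leq e-\sfrac{1}{\alpha}$, where the last step amounts to $\alpha(e-1)\geq1$. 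With that substitution your proof is complete; your closing remark re-deriving Corollary~\ref{cor:nicegeneralcompetitiveness} from the same computation is also correct (and again needs only the repaired inequality).
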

\begin{proof}
First note the identity $b^{\alpha-1}=\alpha(1+\sfrac{1}{\alpha})^{\alpha-1}$.
Moreover, using the definitions from Theorem~\ref{thm:competitiveness}, we see that $\eta=b^{\alpha-1}$ and $\alpha^{\alpha}\mu=b^{\alpha-1}$.
By applying Theorem~\ref{thm:competitiveness} to our choice of parameters, the competitive factor of $A$ becomes
\begin{align*}
\alpha^{\alpha}(1+\mu)+2+\eta &= \alpha^{\alpha}+2+2 b^{\alpha-1}=\alpha^{\alpha} + 2\left(1+\alpha(1+\sfrac{1}{\alpha})^{\alpha-1}\right)\\
&\leq \alpha^{\alpha} + 2\alpha(1+\sfrac{1}{\alpha})^{\alpha}\leq\alpha^{\alpha}+2 e\alpha
.
\end{align*}
\qed
\end{proof}
\begin{corollary}\label{cor:competitivenessforcheapjobs}
If only considering instances for which the job values are at least $\frac{8}{2+3\alpha}\gamma\leq\gamma$, $A$'s competitive factor is at most $\alpha^{\alpha}+2e\alpha$.
\end{corollary}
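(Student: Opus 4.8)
The plan is to reduce this to the computation already carried out in the proof of Corollary~\ref{cor:competitiveness}, reusing the same parameter choices $\crejTWO=\alpha^{\frac{\alpha-2}{\alpha-1}}$, $b=\frac{\alpha+1}{\crejTWO}$, and $\crejONE=\frac{4}{1+b^{\alpha-1}}$. The key observation is that the only instance-dependent quantity entering Theorem~\ref{thm:competitiveness} is \valdensMAX, which Lemma~\ref{lem:rejectedvalue} defines as the maximum value density \emph{among jobs of value less than $\crejONE\gamma$}. Hence, if the instance contains no job of value below $\crejONE\gamma$, this maximum is taken over the empty set and may be set to $0$. Then $\eta=\max(0,b^{\alpha-1})=b^{\alpha-1}$, precisely the value used in Corollary~\ref{cor:competitiveness}, and the competitiveness computation there carries over verbatim.

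First I would show that the hypothesis — all job values are at least $\frac{8}{2+3\alpha}\gamma$ — indeed excludes jobs of value below $\crejONE\gamma$. This reduces to verifying $\frac{8}{2+3\alpha}\geq\crejONE=\frac{4}{1+b^{\alpha-1}}$. Clearing the (positive) denominators turns this into $2(1+b^{\alpha-1})\geq 2+3\alpha$, i.e.\ $b^{\alpha-1}\geq\frac{3}{2}\alpha$. Using the identity $b^{\alpha-1}=\alpha(1+\sfrac{1}{\alpha})^{\alpha-1}$ established in the proof of Corollary~\ref{cor:competitiveness}, this is equivalent to the elementary inequality $(1+\sfrac{1}{\alpha})^{\alpha-1}\geq\frac{3}{2}$.

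The main (and essentially only) obstacle is this last inequality. At $\alpha=2$ it holds with equality, since $(1+\sfrac{1}{2})^{1}=\frac{3}{2}$; for $\alpha>2$ the quantity $(1+\sfrac{1}{\alpha})^{\alpha-1}$ increases towards $e$, so the bound is satisfied on all of $[2,\infty)$. I would confirm this by taking logarithms and differentiating $(\alpha-1)\ln(1+\sfrac{1}{\alpha})$ with respect to $\alpha$, showing the derivative is nonnegative for $\alpha\geq2$. The same token $\frac{8}{2+3\alpha}\leq1\iff\alpha\geq2$ yields the asserted bound $\frac{8}{2+3\alpha}\gamma\leq\gamma$.

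It remains to assemble the competitive factor. With $\eta=b^{\alpha-1}$ and the parameter identities $\crejTWO^{\alpha-1}\alpha^2=\alpha^{\alpha}$ and $\alpha^{\alpha}\mu=b^{\alpha-1}$, the two maxima in Theorem~\ref{thm:competitiveness} collapse: $\max(\crejTWO^{\alpha-1}\alpha^2,\alpha^{\alpha})=\alpha^{\alpha}$, and since $1+\sfrac{4}{\crejONE}=2+b^{\alpha-1}=2+\eta$ we also get $\max(2+\eta,1+\sfrac{4}{\crejONE})=2+\eta$. The bound therefore equals $\alpha^{\alpha}(1+\mu)+2+\eta=\alpha^{\alpha}+2+2b^{\alpha-1}$, which is at most $\alpha^{\alpha}+2e\alpha$ by exactly the estimate $2(1+\alpha(1+\sfrac{1}{\alpha})^{\alpha-1})\leq 2\alpha(1+\sfrac{1}{\alpha})^{\alpha}\leq 2e\alpha$ from the proof of Corollary~\ref{cor:competitiveness}. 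This completes the plan.
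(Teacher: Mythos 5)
Your proof is correct and takes essentially the same approach as the paper: the paper's one-line proof likewise reduces the statement to Corollary~\ref{cor:competitiveness} by verifying $\crejONE=\frac{4}{1+b^{\alpha-1}}\leq\frac{8}{2+3\alpha}$, which is exactly your inequality $(1+\sfrac{1}{\alpha})^{\alpha-1}\geq\frac{3}{2}$ for $\alpha\geq2$. The additional details you spell out (that the value-density restriction then holds vacuously, and the re-derivation of the $\alpha^{\alpha}+2e\alpha$ bound) are precisely what the paper leaves implicit in citing Corollary~\ref{cor:competitiveness}.
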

\begin{proof}
Follows from Corollary~\ref{cor:competitiveness} by using that for $\alpha\geq2$ we have $\frac{4}{1+b^{\alpha-1}}=\frac{4}{1+\alpha(1+\sfrac{1}{\alpha})^{\alpha-1}}\leq\frac{4}{1+\frac{3}{2}\alpha}=\frac{8}{2+3\alpha}$.
\qed
\end{proof}
Note that the bound from Corollary~\ref{cor:nicegeneralcompetitiveness} is nearly tight with respect to $\valdensMAX$ and the lower bound from Theorem~\ref{thm:lowerbound}.

\section{The Speed-Bounded Case}
As stated earlier, our model can be considered as a generalization of~\cite{Chan:2010}.
It adds sleep states, leading to several structural difficulties which we solved in the previous section.
A further, natural generalization to the model is to cap the speed at some maximum speed $T$.
Algorithms based on \OA often lend themselves to such bounded speed models.
In many cases, a canonical adaptation – possibly mixed with a more involved job selection rule – leads to an algorithm for the speed bounded case with similar properties (see, e.g., \cite{Chan:2010,Han:2010,Chan:2007,Bansal:2008}).
A notable property of the profit-oriented scheduling model of~\cite{Chan:2010} is that limiting the maximum speed leads to a non-constant competitive factor.
Instead, it becomes highly dependent on a job's penalty ratio defined as $\penrat{j}:=\sfrac{\sprofup{j}}{T}$.
They derive a lower bound of $\LDAUOmega{\max(\sfrac{e^{\alpha-1}}{\alpha},\penratMAX^{\alpha-2+\sfrac{1}{\alpha}})}$ where $\penratMAX=\max\penrat{j}$.
Since our model generalizes their model, this bound transfers immediately to our setting (for the case $\beta=\gamma=0$).
On the positive side we can adapt our algorithm, similar to~\cite{Chan:2010}, by additionally rejecting a job if its speed planned by \OA is larger than $T$ (cf.~rejection condition in algorithm description, Section~\ref{sec:algorithm+analysis}).
Our main theorem from Section~\ref{sec:algorithm+analysis} becomes
\begin{theorem}\label{thm:competitiveness+boundedspeed}
Let $\alpha\geq2$ and let \valdensMAX be the maximum value density of jobs of value less than $\crejONE\gamma$.
Moreover, define $\eta:=\max\bigl(\valdensMAX\frac{\scrit}{P_{\alpha,\beta}(\scrit)},\penratMAX^{\alpha-1}b^{\alpha-1}\bigr)$ and $\mu:=\penratMAX^{\alpha-1}\smash{\frac{b^{\alpha-1}}{(b-1)^{\alpha}}}$ for $b\geq1$.
Then, $A$'s competitive factor is at most
\begin{equation*}
\alpha^{\alpha}\left(1+\mu\right)+\max\left(2+\eta,1+\sfrac{4}{\crejONE}\right)
.
\end{equation*}
\end{theorem}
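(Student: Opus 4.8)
The plan is to mirror the proof of Theorem~\ref{thm:competitiveness} step by step, re-deriving the three component bounds (sleep/idle, working, rejected value) for the speed-bounded algorithm and then recombining them in exactly the same way. The only change to the algorithm is the extra rejection rule $s_{\OA}>T$, which guarantees that every accepted job can be scheduled within the cap; conceptually this rule plays the role that $s_{\OA}>\crejTWO\sprofup{j}$ played in the uncapped case, but now measured against $T$ rather than against the profitable speed. First I would observe that Lemma~\ref{lem:sleep+idle} carries over verbatim: its proof uses only the idle/sleep interval partition, the rejection rule $v_j\geq\crejONE\abs{X}\beta$, the fact that each idle interval costs at most $\gamma$, and $\Esleep{A}\leq\Eidle{A}$ — none of which involves speed scaling. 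Hence $\Esleep{A}+\Eidle{A}\leq6\Esleep{O}+2\Esys{O}+\frac{4}{\crejONE}\Vrej{O}$ still holds.

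Next I would re-establish the working-energy bound. The amortized potential argument of Lemma~\ref{lem:work} still applies, using the same $\Phi$ built from the critical intervals of $A$'s remaining accepted work. The running condition is unaffected in substance: since the local-competitiveness constant of \OA is preserved under a speed cap (as exploited in \cite{Chan:2007,Han:2010}), each of the four speed cases again yields the factor $\alpha^{\alpha}$ on $O$'s rate of working-energy expenditure. For the arrival condition I would re-examine the effect of the modified rejection rule: accepted jobs now satisfy $s_{\OA}\leq T$, and I would verify that this keeps the coefficient of $\Vrej{O}$ at most $\alpha^{\alpha}$, so that after recombining the working-energy multiplier collapses from $\max(\crejTWO^{\alpha-1}\alpha^2,\alpha^{\alpha})$ to $\alpha^{\alpha}$ — precisely the structural difference between Theorem~\ref{thm:competitiveness} and the claimed bound.

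The heart of the argument is the speed-bounded version of Lemma~\ref{lem:rejectedvalue}, where the penalty ratio enters. I would keep the partition into $J_1$ (rejected by both) and $J_2$ (rejected by $A$ only), bound $J_1$ by $\Vrej{O}$, and redo the three cases for $j\in J_2$. Cases~1 and~2 are unchanged, producing the $\Ework{O}(j)$ terms governed by $\valdens{j}<\frac{\scrit^{\alpha-1}}{\alpha\crejTWO^{\alpha-1}}$ and by $\valdensMAX\frac{\scrit}{P_{\alpha,\beta}(\scrit)}$. In Case~3 the job is rejected because $s_{\OA}>T$; here I would set $b_j:=\sfrac{T}{s_O}\geq1$ and write $\sfrac{\sprofup{j}}{s_O}=\penrat{j}b_j$. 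For $b_j\leq b$ the estimate $\Ework{O}(j)\geq s_O^{\alpha-1}w_j$ gives $v_j\leq(\penrat{j}b_j)^{\alpha-1}\Ework{O}(j)\leq\penratMAX^{\alpha-1}b^{\alpha-1}\Ework{O}(j)$, the second argument of $\eta$. For $b_j>b$ I would reuse the Fact~\ref{fact:structuralproperties} chain: the work $W$ that $\OA^{r_j}_{-}$ plans during $O$'s processing interval $I$ satisfies $W>T\abs{I}-w_j$, $A$'s average speed on $W$ is at least $\sfrac{W}{\abs{I}}>T-s_O$, and convexity gives $\Ework{A}(W)>\abs{I}(T-s_O)^{\alpha}=w_j s_O^{\alpha-1}(b_j-1)^{\alpha}$; dividing yields $v_j<\penrat{j}^{\alpha-1}\frac{b_j^{\alpha-1}}{(b_j-1)^{\alpha}}\Ework{A}(W)$. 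Since $\frac{x^{\alpha-1}}{(x-1)^{\alpha}}$ decreases for $x>1$, this is at most $\penratMAX^{\alpha-1}\frac{b^{\alpha-1}}{(b-1)^{\alpha}}\Ework{A}(W)=\mu\,\Ework{A}(W)$. Summing over $J_2$ yields $\Vrej{A}\leq\eta\Ework{O}+\mu\Ework{A}+\Vrej{O}$ with the new $\eta$ and $\mu$.

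Finally I would recombine exactly as in the proof of Theorem~\ref{thm:competitiveness}: substitute $\Esys{O}\leq\Eidle{O}+\Ework{O}$ into the sleep/idle bound, plug the working-energy bound into the rejected-value bound, collect the coefficients of $\Esleep{O},\Eidle{O},\Ework{O},\Vrej{O}$, and take the maximum using $6\leq\alpha^{\alpha}+2$. I expect the main obstacle to be the working-energy step: one must check carefully that the structural properties of \OA in Fact~\ref{fact:structuralproperties} and the four local-competitiveness cases survive the speed cap, and that the interaction of the cap with the sleep state and the replaced rejection rule does not inflate the $\Vrej{O}$-coefficient beyond $\alpha^{\alpha}$. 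By contrast, the rejected-value Case~3 is technically delicate but essentially bookkeeping once $b_j:=\sfrac{T}{s_O}$ and the penalty ratio $\penrat{j}$ are introduced.
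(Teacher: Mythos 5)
Your overall route is the paper's route: keep Lemma~\ref{lem:sleep+idle} untouched, rework the rejected-value lemma with the new rejection case, and recombine as in Theorem~\ref{thm:competitiveness}. Your rejected-value analysis is correct and essentially identical to the paper's: the paper keeps $b_j=\sfrac{\sprofup{j}}{s_O}$ and splits at $b_j\leq\penrat{j}b$, which is exactly your split $\sfrac{T}{s_O}\leq b$ after substituting $\sprofup{j}=\penrat{j}T$; both give $v_j\leq\penratMAX^{\alpha-1}b^{\alpha-1}\Ework{O}(j)$ respectively $v_j<\mu\,\Ework{A}(W)$.

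The gap is in your working-energy step. You claim the cap makes the $\Vrej{O}$-coefficient in the arrival condition at most $\alpha^{\alpha}$ because accepted jobs satisfy $s_{\OA}\leq T$. This does not work: in the critical case ($O$ rejects a job that $A$ accepts), the potential increase is bounded by $\alpha^2\rho_k'^{\alpha-1}w_j$ with $\rho_k'=s_{\OA}$, and plugging in $s_{\OA}\leq T$ gives $\alpha^2T^{\alpha-1}w_j=\frac{\alpha^2}{\penrat{j}^{\alpha-1}}\,v_j$ (using $v_j=\sprofup{j}^{\alpha-1}w_j$ and $\penrat{j}=\sfrac{\sprofup{j}}{T}$), which is arbitrarily large compared to $v_j$ for jobs with small penalty ratio ($\sprofup{j}\ll T$) --- precisely for those jobs the $T$-rule is the \emph{weaker} of the two rejection rules. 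What actually controls this coefficient is the retained rule $s_{\OA}\leq\crejTWO\sprofup{j}$, which yields $\crejTWO^{\alpha-1}\alpha^2v_j$ exactly as before; this is why the paper simply declares that Lemmas~\ref{lem:sleep+idle} and~\ref{lem:work} hold \emph{unchanged}, since adding a rejection rule cannot hurt their proofs. The collapse of $\max\left(\crejTWO^{\alpha-1}\alpha^2,\alpha^{\alpha}\right)$ to $\alpha^{\alpha}$ in the theorem has nothing to do with the cap: it comes from the parameter choice $\crejTWO=\alpha^{\frac{\alpha-2}{\alpha-1}}$ fixed after Theorem~\ref{thm:competitiveness}, for which $\crejTWO^{\alpha-1}\alpha^2=\alpha^{\alpha}$. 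If you keep $\crejTWO$ general, your bound should read $\max\left(\crejTWO^{\alpha-1}\alpha^2,\alpha^{\alpha}\right)(1+\mu)+\dots$; and if instead you really replace the $\crejTWO$-rule by the $T$-rule (as your phrase ``measured against $T$ rather than against the profitable speed'' suggests), the arrival condition breaks for small-$\penrat{j}$ jobs and the working-energy lemma fails altogether. Also note a second, smaller omission shared with this replacement reading: with both rules active there is a fourth rejection reason ($s_{\OA}>\crejTWO\sprofup{j}$ with $s_{\OA}\leq T$), whose contribution must be (and, for $\penratMAX\geq1$ and $\crejTWO\geq1$, is) absorbed by $\eta$ and $\mu$.
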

\begin{proof}[sketch]
Note that the results from Lemmas~\ref{lem:sleep+idle} and~\ref{lem:work} remain valid without any changes, as an additional rejection rule does not influence the corresponding proofs.
The only lemma affected by the changed algorithm is Lemma~\ref{lem:rejectedvalue}.
In its proof, we have to consider an additional rejection case, namely that job $j$ got rejected because of $s_{\OA}>T=\frac{1}{\penrat{j}}\sprofup{j}$.
This can be handled completely analogously to Case~3 in the proof, using the factor $\frac{1}{\penrat{j}}$ instead of \crejTWO.
We get the bounds $v_j\leq b_j^{\alpha-1}\Ework{O}(j)$ and $v_j<\sfrac{b_j^{\alpha-1}}{(\sfrac{b_j}{\penrat{j}}-1)^{\alpha}}\Ework{A}(W)$.
If $b_j\leq\penrat{j}b$ this yields $v_j\leq \penrat{j}^{\alpha-1}b^{\alpha-1}\Ework{O}(j)$.
Otherwise, if $b_j>\penrat{j}b$, we have $v_j<\penrat{j}^{\alpha-1}\frac{b^{\alpha-1}}{(b-1)^{\alpha}}\Ework{A}(W)$.
The remaining argumentation is the same as in the proof of Theorem~\ref{thm:competitiveness}.
\qed
\end{proof}
For $b=\alpha+1$ and the interesting case $\Gamma>1$ we get a competitive factor of $\alpha^{\alpha}(1+2\Gamma^{\alpha-1})+\valdensMAX\frac{\scrit}{P_{\alpha,\beta}(\scrit)}$.
For job values of at most $\gamma$ it is $\alpha^{\alpha}(1+2\penratMAX^{\alpha-1})$.

\section{Conclusion \& Outlook}
We examined a new model that combines modern energy conservation techniques with profitability.
Our results show an inherent connection between the necessary and sufficient competitive ratio of rejection-oblivious algorithms and the maximum value density.
A natural question is how far this connection applies to other, more involved algorithm classes.
Can we find better strategies if allowed to reject jobs even after we invested some energy, or if taking former rejection decisions into account?
Such more involved rejection policies have proven useful in other models~\cite{Pruhs:2010,Han:2010}, and we conjecture that they would do so in our setting.
Other interesting directions include models for multiple processors as well as general power functions.
\Textcite{Pruhs:2010} modeled job values and deadlines in a more general way, which seems especially interesting for our profit-oriented model.

\printbibliography
\end{document}